\newtheorem{theorem}{Theorem}[section]
\newtheorem{proposition}{Proposition}[section]
\theoremstyle{definition}
\newtheorem{remark}{Remark}[section]
\definecolor{darkread}{rgb}{0.7, 0, 0}
\definecolor{darkbrown}{rgb}{0.55, 0.2, 0.15}
\definecolor{darkblue}{rgb}{0.1,0.1,0.6}
\definecolor{darkgreen}{rgb}{0.1,0.5,0.2}
\title{\Large\bf Intra-Class Correlation Coefficient Ignorable Clustered Randomized Trials for Detecting Treatment Effect Heterogeneity}
\author[1,2,3]{Chen Yang\thanks{Corresponding author; e-mail \href{mailto:chen.yang@mountsinai.org}{chen.yang@mountsinai.org}}}
\author[1,2,3]{M\'arcio A. Diniz}
\author[4]{Deukwoo Kwon}
\author[1,2,3]{Madhu Mazumdar}
\affil[1]{\normalsize Department of Population Health Science and Policy, Icahn School of Medicine at Mount Sinai, New York, USA}
\affil[2]{\normalsize Institute for Healthcare Delivery Science, Mount Sinai Health System, New York, USA}
\affil[3]{\normalsize Tisch Cancer Institute, Icahn School of Medicine at Mount Sinai, New York, USA}
\affil[4]{\normalsize Division of Clinical and Translational Sciences, Department of Internal Medicine, University of Texas Health Science Center at Houston, Texas, USA}
\date{}
\begin{document}
%\linenumbers

\maketitle
\vspace{-10mm}

\noindent
{\bf Abstract.}
Accurately estimating the intra-class correlation coefficient (ICC) is crucial for adequately powering clustered randomized trials (CRTs). Challenges arise due to limited prior data on the specific outcome within the target population, making accurate ICC estimation difficult. Furthermore, ICC can vary significantly across studies, even for the same outcome, influenced by factors like study design, participant characteristics, and the specific intervention. Power calculations are extremely sensitive to ICC assumptions. Minor variation in the assumed ICC can lead to large differences in the number of clusters needed, potentially impacting trial feasibility and cost. 

\bigskip
This paper identifies a special class of CRTs aiming to detect the treatment effect heterogeneity, wherein the ICC can be completely disregarded in calculation of power and sample size. This result offers a solution for research projects lacking preliminary estimates of the ICC or facing challenges in their estimate. Moreover, this design facilitates power improvement through increasing the cluster sizes rather than the number of clusters, making it particular advantageous in the situations where expanding the number of clusters is difficult or costly.

\bigskip
This paper provides a rigorous theoretical foundation for this class of ICC-ignorable CRTs, including mathematical proofs and practical guidance for implementation. We also present illustrative examples to demonstrate the practical implications of this approach in various research contexts in healthcare delivery.
\smallskip

\noindent
{\it Key words and phrases}: Cluster Randomized Trials, Intra-class Correlation Coefficient, Heterogeneity of Treatment Effect, Unequal Cluster Sizes, Random Allocation Rule, Pre-randomization Expected Power.

%\tableofcontents

\section{Introduction}\label{sec1: intro}

Interest in investigating heterogeneity of treatment effect (HTE) \citep{yang2020sample} in cluster randomized trials (CRTs) is growing, particularly for interventions implemented at the cluster level (e.g.,  evaluation of clinical decision support systems or training programs for physicians caring for multiple patients in clinics as clusters) \citep{yang2024power}. Although randomization occurs at the cluster level in CRTs, data collection and subsequent analyses are conducted at the individual patient level. Individual outcomes are correlated within clusters, an effect quantified by the intra-class correlation coefficient (ICC). Additionally, individuals within specific subgroups (e.g., sex, age or race/ethnicity) may respond similarly to the intervention, but these responses can differ significantly across subgroups. Mixed-effects models or generalized estimating equations are commonly used to analyze CRT data. These models typically include cluster and treatment as random effect, the subgroup and treatment interaction as fixed effects to estimate HTE.

The overall treatment effect (OTE) often serves as the primary endpoint in trials, while subgroup analyses examining HTE are often secondary \citep{nowalk2014increasing,sequist2010cultural,garbutt2015cluster,kruis2014effectiveness}. Trials primarily focused on detecting HTE often lack sufficient power, as sample size calculations are typically based on the OTE. This issue was particularly pronounced in health disparity research, stimulating substantial research in this area.  \citet{yang2024power} formulated a power calculation method for detecting HTE in parallel CRTs with equal cluster sizes. \citet{tong2022accounting} extended this framework to include scenarios with unequal cluster sizes. For CRTs with binary outcomes, \citet{maleyeff2023sample} derived the design effect under equal cluster sizes using a logistic mixed-effect model. \citet{tong2023accounting} provided the design effect for CRTs with equal cluster sizes when outcomes are missing completely at random. \citet{li2023designing} further explored the design effect in three-level CRTs, considering different levels of randomization. These contributions collectively provide a comprehensive framework for sample size determination and power calculation in CRTs aimed at detecting HTE as a function of the ICC. Note that the aforementioned references \citep{yang2024power,tong2022accounting} implicitly rely on the assumption of homogeneous subgroup distributions within each cluster, which might require additional verification in practical applications.

Accurately estimating the ICC is crucial for adequately powering CRTs. Challenges arise due to limited prior data on the specific outcome within the target population, making accurate ICC estimation difficult. Furthermore, ICC can vary significantly across studies, even for the same outcome, influenced by factors like study design, participant characteristics, and the specific intervention. Power calculations are extremely sensitive to ICC assumptions. Minor variation in the assumed ICC can lead to large differences in the number of clusters needed, potentially impacting trial feasibility and cost. Underestimating the ICC increases the risk of underpowering the trial, leading to higher probability of failing to detect a true treatment effect. Addressing this challenge requires incorporating ICC uncertainty into power calculations. While this approach can be complex and may require specialized statistical software, it is crucial for robust trial design. Sensitivity analyses are commonly employed to assess the impact of ICC uncertainty on sample size requirements. However, these analyses can sometimes yield a wide range of sample size recommendations, potentially limiting their practical utility.

To address this problem, in this paper, we identify a class of parallel CRTs specifically designed to detect HTE, where the ICC can be completely disregarded in sample size determination and power calculation (hereafter referred to as ``ICC-ignorable CRTs’’). Furthermore, the sample size and power calculation formulas for ICC-ignorable CRTs exhibit patterns closely resembling those of individual-level randomized trials. This allows for enhanced statistical power in CRTs by increasing either the number of clusters or the cluster sizes, providing greater flexibility in study design and potentially increasing the feasibility of conducting adequately powered trials. This is particularly advantageous for underpowered CRTs where increasing the number of clusters is infeasible. Moreover, our ICC-ignorable design requires only that the proportions of subgroups be identical within each cluster, which is more accessible than the assumption of homogeneous subgroup distribution and readily verifiable compared with the potential extra steps to verifying the homogeneous distribution within each cluster in practice.

We also verify that the ICC-ignorable property applies to CRTs with unequal cluster sizes. In such cases, the pre-randomization expected power (PREP), defined as the average power across all possible treatment assignments, is influenced by randomization schemes \citep{ouyang2021crtpowerdist}. Therefore, we focus on the power and sample size calculation for ICC-ignorable CRTs under the random allocation rule \citep{lachin1988properties}, which ensures a 1:1 ratio between intervention and control arms. We propose an approximation method for power and sample size calculation in ICC-ignorable CRTs and evaluate its performance using Monte Carlo simulation.

The rest of the paper is organized as follows: Section~\ref{sec2: model} introduces the statistical model and parameters for general parallel CRTs. Section~\ref{sec3: main} presents our main findings regarding the ICC-ignorable property for both equal and unequal cluster sizes. This section also reviews the random allocation rule and provides an approximation of the variance for the generalized least squares estimation of the HTE under this rule. Section~\ref{sec4: simulation} details the operational characteristics of our simulations, which verify the ICC- ignorable property and evaluate the performance of our proposed variance approximation under the random allocation rule. Section~\ref{sec5: application} illustrates the sample size and power calculations for detecting HTE using two existing CRTs. Finally, Section~\ref{sec6: discussion} concludes the paper with discussions and future directions.

\section{Statistical Model}\label{sec2: model}

Assume there are $I$ ($I\ge 2$) clusters included in the parallel CRT with two arms: intervention and control. The cluster sizes are denoted by a vector $\boldsymbol{m} = \begin{bmatrix} m_1 & \cdots & m_I \end{bmatrix}^{\top}$ with element $m_i$ representing the number of participants within cluster $i$ for $i=1,\ldots,I$. The  parallel CRT design is characterized by a random vector $\boldsymbol{W} = \begin{bmatrix} W_1 & \cdots & W_I \end{bmatrix}^{\top}$ with element $W_i=1$ representing that cluster $i$ is assigned to the intervention arm and $W_i = 0$ otherwise. Within each cluster $i$, participants are characterized by a categorical variable with $(p+1)$ levels. From these levels, one is designated as the reference level, henceforth referred to as the reference subgroup. The categorical variable can be transferred into a binary matrix $\mathbf{X}_i$ with dimension $(m_i\times p)$ and compresses $p$ dummy variables with element $X_{ijl} = 1$ and $X_{ijl} = 0$ 
representing that participant $j\in\{1,\ldots,m_i\}$ in cluster $i$ belongs to the subgroup $l\in\{1,\ldots,p\}$ and $\sum^p_{l=1}X_{ijl} = 0$ representing that participant $j$ from cluster $i$ belongs to the reference subgroup, respectively. Moreover, the continuous outcome $\boldsymbol{Y}_i\in\mathbb{R}^{m_i}$ is assumed to be generated by the following linear mixed-effect model (LMM):
\begin{equation}\label{model}
\boldsymbol{Y}_i = (\beta_1 + \beta_2W_i)\mathbf{1}_{m_i} + \mathbf{X}_i(\boldsymbol{\beta}_3 + \boldsymbol{\beta}_4W_i) + (\gamma_i\mathbf{1}_{m_i} + \boldsymbol{\epsilon}_i),
\end{equation}
for $i=1,\ldots,I$, where $\mathbf{1}_{m_i}$ is the $m_i$-dimensional vector with all elements equal to $1$, $\gamma_i\sim\mathcal{N}(0, \sigma^2_{\gamma})$ is the random intercept at the cluster level and $\boldsymbol{\epsilon}_i\sim\mathcal{N}_{m_i}(\mathbf{0}, \sigma^2_{\epsilon}\mathbf{I}_{m_i})$ represents the individual level random error ($\mathbf{I}_{m_i}\in\mathbb{R}^{m_i\times m_i}$ is the identity matrix of dimension $m_i\times m_i$), $\beta_1$ is the grand intercept for the reference subgroup in the control arm, $\beta_2$ is the treatment effect for the reference subgroup (those participant $j$ with $\sum^p_{l=1}X_{ijl} = 0$ within cluster $i$), $\boldsymbol{\beta}_3\in\mathbb{R}^p$ represents the subgroup fixed effect, and $\boldsymbol{\beta}_4\in\mathbb{R}^p$ represents the HTE. We further assume that $\gamma_i$ is independent of $\boldsymbol{\epsilon}_j$ for all combinations of cluster pair $\{(i,k)\}_{1\le i,k\le I}$.

The power of the test with null hypothesis $H_0:\boldsymbol{\beta}_4 = \mathbf{0}$ is usually calculated by a Wald test which requires the unconditional variance of the generalized least square (GLS) estimator $\hat{\boldsymbol{\beta}}_4$ of the HTE, $\boldsymbol{\beta}_4$. For the parallel CRT with equal cluster sizes, i.e. $\boldsymbol{m} = m\mathbf{1}_I$ for some $m\in\mathbb{N}_+$, Yang et. al. (2020) provides the following result:
$$\mathbb{V}ar(\hat{\beta}_4) = \frac{\sigma^2_4(m, \rho, \rho_x, \overline{W}, \sigma^2_{y|x})}{I} + o\left(\frac{1}{I}\right)$$
for the case that $\boldsymbol{\beta}_4$ is 1-dimensional, where
\begin{equation}\label{existing result}
\sigma^2_4(m, \rho, \rho_x, \overline{W}, \sigma^2_{y|x}) = \frac{\sigma^2_{y|x}(1 - \rho)\{1 + (m - 1)\rho\}}{m\sigma^2_x\overline{W}(1 - \overline{W})\{1 + (m - 2)\rho - (m - 1)\rho_x\rho\}},
\end{equation}
$\overline{W}=\mathbb{E}[W_i]$ for all $i\in 1,\ldots,I$, $\sigma^2_x$ is the variance of all $X_{ij1}$'s for all $i=1,\ldots,n$ and $j=1,\ldots,m$, $\rho$ is the ICC of the outcome and $\rho_x$ is the common within-cluster correlation coefficient of $X_{ij1}$'s \citep{yang2020sample}. Beside, the authors also provide an extension of \eqref{existing result} for multi-dimensional $\boldsymbol{\beta}_4$. More generally, \citet{tong2022accounting} provides the following result:
$$\mathbb{V}ar(\hat{\boldsymbol{\beta}}_4) = \frac{\sigma^2_4(F, \bar{m}, \rho, \rho_x, \overline{W}, \sigma^2_{y|x})}{I} + o\left(\frac{1}{I}\right)$$
for 1-dimensional $\boldsymbol{\beta}_4$, where
\begin{equation}\label{existing result general}
\sigma^2_4(F, \bar{m}, \rho, \rho_x, \overline{W}, \sigma^2_{y|x}) = \frac{\sigma^2_{y|x}(1 - \rho)}{\sigma^2_x\overline{W}(1 - \overline{W})[\bar{m} + (1 - \rho_x)\bar{p} + \rho_x\bar{q}]},
\end{equation}
$\bar{m} = \mathbb{E}[m_i]$,
$$\bar{p} = \mathbb{E}\left[\frac{-m_ip}{1 + (m_i - 1)\rho}\right],\quad \bar{q} = \mathbb{E}\left[\frac{-m^2_i\rho}{1 + (m_i - 1)\rho}\right].$$
by assuming cluster sizes $m_1,\ldots,m_I$ are drawn randomly from some common distribution function $F$. Futhermore, a multi-dimensional extension of \eqref{existing result general} can also be found in \citet{tong2022accounting}. In this paper, we shall identify a specific class of parallel CRTs such that $\sigma^2_4$ does not depend on $\rho$ (or $\sigma^2_{\gamma}$). 

\section{The ICC-ignorable parallel CRTs}\label{sec3: main}

Intuitively, if all clusters exhibit similar patterns, then it is equivalent to having small variability of cluster random effect $\gamma_i$, meaning small value for the variance $\sigma^2_{\gamma}$. When the ICC $\rho$ should also be very small. If the ICC $\rho$ approaches zero the CRT effectively becomes an individual-level randomized trial. In this section, we examine parallel CRTs for testing $H_0:\boldsymbol{\beta}_4 = 0$ by fixing the group proportions within each cluster. We demonstrate that the ICC $\rho$ can be disregarded in power calculations for such designs.

\subsection{Constant cluster size scenario}\label{sec: main equal}

First let us consider the particular case with $\boldsymbol{m} = m\mathbf{1}_I$ for some cluster size $m\ge 2$, for which \eqref{existing result} holds for 1-dimensional $\boldsymbol{\beta}_4$ as we have only two subgroups. We further assume that
\begin{equation}\label{fixed prevalence}
\mathbf{1}^{\top}_m\boldsymbol{X}_i = k = \theta m,\quad \theta\in(0, 1)
\end{equation}
for all $i=1,\ldots,I$, where $\boldsymbol{X}_i$ is a vector of $X_{ij1}$'s. Notice that by imposing condition \eqref{fixed prevalence} the probability for individual $j$ in cluster $i$ to have attribute $X_{ij1} = 1$ is equivalent to the selection probability of a simple random sample of size $k$ from a population of size $m$, i.e.
$$\mathbb{P}(X_{ij1} = 1) = \frac{\displaystyle{m - 1\choose k - 1}}{\displaystyle{m\choose k}} = \frac{k}{m} = \theta.$$
Similarly, the joint probability of $X_{ij1}$ and $X_{il1}$ is given by
$$\mathbb{P}(X_{ij1} = 1, X_{il1} = 1) = \frac{\displaystyle{m - 2\choose k - 2}}{\displaystyle{m\choose k}} = \frac{k(k - 1)}{m(m - 1)} = \frac{(k - 1)\theta}{m - 1},\quad l\neq j.$$
Hence
$$\sigma^2_x = \mathbb{V}av(X_{ij1}) = \theta(1 - \theta),\quad \mathbb{C}ov(X_{ij1}, X_{il1}) = \theta\left(\frac{k - 1}{m - 1} - \frac{k}{m}\right) = -\frac{\theta(1 - \theta)}{(m - 1)},$$
which leads to
\begin{equation}\label{rho x}
\rho_x = -\frac{1}{m - 1}.
\end{equation}
Plugging \eqref{rho x} in \eqref{existing result} yields
\begin{equation}\label{result particular}
\sigma^2_4(m, \rho, \rho_x, \overline{W}, \sigma^2_{y|x}) = \frac{\sigma^2_{y|x}(1 - \rho)}{m\sigma^2_x\overline{W}(1 - \overline{W})} = \frac{\sigma^2_{\epsilon}}{m\theta(1 - \theta)\overline{W}(1 - \overline{W})}
\end{equation}
as the ICC $\rho = \sigma^2_{\gamma}/\sigma^2_{y|x}$ and $\sigma^2_{y|x} = \sigma^2_{\gamma} + \sigma^2_{\epsilon}$. Notice that neither $\rho$ nor $\sigma^2_{\gamma}$ appears in \eqref{result particular}, which means by fixing the within-cluster proportion of each group, the random effect $\gamma_i$ is completely ignored in the power calculation based on Wald test. Moreover, Moreover, the power can be improved to any desirable level by either increasing the number of clusters, $I$ or increasing the cluster size, $m$. Last but not least, an obvious optimal design (minimizing the required sample size given the power) is such that $\overline{W} = \theta = 0.5$, meaning clusters are being equally randomized to intervention and control arms. Under such an optimal design, the percentage of subjects within a cluster in one of two possible subgroups is 50\% for which the variance $\sigma^2_4(m, \rho, \rho_x, \overline{W}, \sigma^2_{y|x})$ reaches its minimal value $16\sigma^2_{\epsilon}/m$.

\subsection{Variable cluster size scenario}

So far, we have shown that condition~\eqref{fixed prevalence} helps remove $\rho$ (or $\sigma^2_{\gamma}$) from the expression of $\sigma^2_4(m, \rho, \rho_x, \overline{W}, \sigma^2_{y|x})$ according to \eqref{existing result}. Next, we wondered if the same result holds for the general cases accounting unequal cluster sizes. Unfortunately, by slightly modifying the condition \eqref{fixed prevalence} as
$$\mathbf{1}^{\top}_m\boldsymbol{X}_i = k = \theta m_i,\quad \theta\in(0, 1)$$
we cannot simply replicate the heuristic analysis directly from \eqref{existing result general} for 1-dimensional $\boldsymbol{\beta}_4$ like we did in Section~\ref{sec: main equal} because by imposing the above condition we do not have the same $\rho_x$ across all clusters. Nevertheless, we may consider a slightly different version of \eqref{existing result general}:
\begin{align*}
\sigma^2_4(F, \bar{m}, \rho, \rho_x, \overline{W}, \sigma^2_{y|x}) &= \frac{\sigma^2_{y|x}(1 - \rho)}{\sigma^2_x\overline{W}(1 - \overline{W})\mathbb{E}\left[m_i - \cfrac{\left(1 - \rho_x(m_i)\right)m_i\rho}{1 + (m_i - 1)\rho} - \cfrac{\rho_x(m_i)m^2_i\rho}{1 + (m_i - 1)\rho}\right]} \\
&= \frac{\sigma^2_{y|x}(1 - \rho)}{\sigma^2_x\overline{W}(1 - \overline{W})\mathbb{E}\left[m_i - \cfrac{\left[1 + \left(1 - \rho_x(m_i)\right)\right]m_i\rho}{1 + (m_i - 1)\rho}\right]}.
\end{align*}
Hence if
$$\rho_x(m_i) = -\frac{1}{m_i - 1}$$
for all $i=1,\ldots,I$, which is the case by fixing the proportion $\theta$ within each cluster, then we immediately have
$$\sigma^2_4(F, \bar{m}, \rho, \rho_x, \overline{W}, \sigma^2_{y|x}) = \frac{\sigma^2_{y|x}(1 - \rho)}{\sigma^2_x\overline{W}(1 - \overline{W})\mathbb{E}[m_i]} = \frac{\sigma^2_{y|x}(1 - \rho)}{\bar{m}\sigma^2_x\overline{W}(1 - \overline{W})},$$
which is exactly the right-hand side of equation~\eqref{fixed prevalence}. Given the above heuristic discussions, we may have the following Theorem~\ref{thm1} (see Appendix~\ref{app1: proof} for the proof):
\begin{theorem}\label{thm1}
Suppose
\begin{equation}\label{fixed prevalence general}
\mathbf{X}^{\top}_i\mathbf{1}_{m_i} = m_i\boldsymbol{\theta},
\end{equation}
where $\boldsymbol{\theta} = \begin{bmatrix} \theta_1 & \cdots & \theta_p\end{bmatrix}^{\top}$ satisfying $0 < \theta_l < 1$ for all $l=1,\ldots,p$ and $\mathbf{1}^{\top}_p\boldsymbol{\theta} < 1$. Then we have
\begin{equation}\label{main result}
\mathbb{V}ar(\hat{\boldsymbol{\beta}}_4|\{(W_i, \mathbf{X}_i)\}^I_{i=1}) = \frac{\sigma^2_{\epsilon}}{I\bar{m}\overline{W}_{\boldsymbol{m}}(1 - \overline{W}_{\boldsymbol{m}})}\left(\text{diag}(\boldsymbol{\theta}) - \boldsymbol{\theta}\boldsymbol{\theta}^{\top}\right)^{-1}.
\end{equation}
where $\text{diag}(\boldsymbol{\theta})\in[0, 1)^{p\times p}$ is a diagonal matrix with the elements of $\boldsymbol{\theta}$ in the main diagonal, 
$$\bar{m} = \frac{1}{I}\sum^I_{i=1}m_i = \frac{\boldsymbol{m}^{\top}\mathbf{1}_I}{I},\quad \overline{W}_{\boldsymbol{m}} = \frac{1}{I\bar{m}}\sum^I_{i=1}m_iW_i = \frac{\boldsymbol{m}^{\top}\boldsymbol{W}}{\boldsymbol{m}^{\top}\mathbf{1}_I}$$
and thus we should be able to write
$$\mathbb{V}ar(\hat{\boldsymbol{\beta}}_4) = \frac{\boldsymbol{\Omega}_4(\boldsymbol{m}, \boldsymbol{\theta}, \sigma^2_{\epsilon}, \mathbb{P})}{I}$$
due to the unbiasness of the GLS estimator $\hat{\boldsymbol{\beta}}_4$, where
\begin{equation}\label{result general}
\boldsymbol{\Omega}_4(\boldsymbol{m}, \boldsymbol{\theta}, \sigma^2_{\epsilon}, \mathbb{P}) = \frac{\sigma^2_{\epsilon}}{\bar{m}}\psi(\mathbb{P}; \boldsymbol{m})\left(\text{diag}(\boldsymbol{\theta}) - \boldsymbol{\theta}\boldsymbol{\theta}^{\top}\right)^{-1}
\end{equation}
with
$$\psi(\mathbb{P}; \boldsymbol{m}) = \mathbb{E}\left[\frac{1}{\overline{W}_{\boldsymbol{m}}(1 - \overline{W}_{\boldsymbol{m}})}\right]$$
under \eqref{fixed prevalence general}.
\end{theorem}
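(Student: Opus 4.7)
The plan is to start from the standard closed-form conditional variance of the GLS estimator,
$$\mathbb{V}ar(\hat{\boldsymbol{\beta}} \mid \{(W_i,\mathbf{X}_i)\}) = \Bigl(\sum_{i=1}^I \mathbf{Z}_i^{\top}\boldsymbol{\Sigma}_i^{-1}\mathbf{Z}_i\Bigr)^{-1},$$
where $\mathbf{Z}_i = [\mathbf{1}_{m_i},\, W_i\mathbf{1}_{m_i},\, \mathbf{X}_i,\, W_i\mathbf{X}_i]$ and $\boldsymbol{\Sigma}_i = \sigma^2_{\gamma}\mathbf{1}_{m_i}\mathbf{1}_{m_i}^{\top} + \sigma^2_{\epsilon}\mathbf{I}_{m_i}$. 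Equation \eqref{main result} is the bottom-right $p\times p$ block of this inverse; the marginal statement for $\mathbb{V}ar(\hat{\boldsymbol{\beta}}_4)$ then follows from the unbiasedness of the GLS estimator together with the law of total variance, taking expectation only over $\boldsymbol{W}$.

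The first key move is a reparameterization that centers the subgroup indicators. Let $\tilde{\mathbf{X}}_i = \mathbf{X}_i - \mathbf{1}_{m_i}\boldsymbol{\theta}^{\top}$ and absorb the shift into $\alpha_1 = \beta_1 + \boldsymbol{\theta}^{\top}\boldsymbol{\beta}_3$, $\alpha_2 = \beta_2 + \boldsymbol{\theta}^{\top}\boldsymbol{\beta}_4$. This is an invertible linear change of basis on $(\beta_1,\beta_2)$ that leaves $(\boldsymbol{\beta}_3,\boldsymbol{\beta}_4)$ and their variances unchanged. Condition \eqref{fixed prevalence general} then forces $\tilde{\mathbf{X}}_i^{\top}\mathbf{1}_{m_i} = \mathbf{0}$, and, using the one-hot structure of $\mathbf{X}_i$ so that $\mathbf{X}_i^{\top}\mathbf{X}_i = m_i\,\text{diag}(\boldsymbol{\theta})$, a short expansion gives $\tilde{\mathbf{X}}_i^{\top}\tilde{\mathbf{X}}_i = m_i(\text{diag}(\boldsymbol{\theta}) - \boldsymbol{\theta}\boldsymbol{\theta}^{\top}) = m_i\mathbf{A}$.

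Next I insert the Woodbury form $\boldsymbol{\Sigma}_i^{-1} = \sigma^{-2}_{\epsilon}\mathbf{I}_{m_i} - \sigma^2_{\gamma}[\sigma^2_{\epsilon}(\sigma^2_{\epsilon} + m_i\sigma^2_{\gamma})]^{-1}\mathbf{1}_{m_i}\mathbf{1}_{m_i}^{\top}$ into $\tilde{\mathbf{Z}}_i^{\top}\boldsymbol{\Sigma}_i^{-1}\tilde{\mathbf{Z}}_i$ for the reparameterized design $\tilde{\mathbf{Z}}_i = [\mathbf{1}_{m_i},\, W_i\mathbf{1}_{m_i},\, \tilde{\mathbf{X}}_i,\, W_i\tilde{\mathbf{X}}_i]$. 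Because $\tilde{\mathbf{X}}_i^{\top}\mathbf{1}_{m_i} = \mathbf{0}$, the rank-one cluster-random-effect contribution is supported entirely on the top-left $2\times 2$ block corresponding to $(\alpha_1,\alpha_2)$, and every cross block with $(\boldsymbol{\beta}_3,\boldsymbol{\beta}_4)$ vanishes identically. Summing over $i$ yields an information matrix that is block-diagonal between $(\alpha_1,\alpha_2)$ and $(\boldsymbol{\beta}_3,\boldsymbol{\beta}_4)$, and whose $(\boldsymbol{\beta}_3,\boldsymbol{\beta}_4)$ block simplifies cleanly to the Kronecker form
$$\frac{1}{\sigma^2_{\epsilon}}\begin{bmatrix} M & M_W \\ M_W & M_W \end{bmatrix}\otimes \mathbf{A}, \qquad M = I\bar{m}, \quad M_W = I\bar{m}\,\overline{W}_{\boldsymbol{m}},$$
with no $\sigma^2_{\gamma}$ remaining, which is precisely the ICC-ignorable phenomenon.

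The final step is mechanical: invert the Kronecker product as $\bigl[\begin{smallmatrix}M & M_W \\ M_W & M_W\end{smallmatrix}\bigr]^{-1}\otimes \mathbf{A}^{-1}$, whose lower-right scalar entry evaluates to $M/[M_W(M-M_W)] = 1/[I\bar{m}\,\overline{W}_{\boldsymbol{m}}(1-\overline{W}_{\boldsymbol{m}})]$, and read off the bottom-right $p\times p$ block to obtain \eqref{main result}; here $\mathbf{A}$ is invertible because $0<\theta_l<1$ and $\mathbf{1}_p^{\top}\boldsymbol{\theta}<1$, so that Sherman--Morrison applies to $\text{diag}(\boldsymbol{\theta}) - \boldsymbol{\theta}\boldsymbol{\theta}^{\top}$. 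Marginalizing over $\boldsymbol{W}$ via the law of total variance (the conditional mean of $\hat{\boldsymbol{\beta}}_4$ equals $\boldsymbol{\beta}_4$, so only $\mathbb{E}\{\mathbb{V}ar(\cdot\mid\boldsymbol{W})\}$ contributes) pulls out the factor $\psi(\mathbb{P};\boldsymbol{m}) = \mathbb{E}[\{\overline{W}_{\boldsymbol{m}}(1-\overline{W}_{\boldsymbol{m}})\}^{-1}]$ and delivers the stated $\boldsymbol{\Omega}_4/I$. The main conceptual hurdle is recognizing that centering $\mathbf{X}_i$ by the cluster-invariant proportions $\boldsymbol{\theta}$ is exactly what decouples the cluster-level and subgroup-level pieces of the information matrix; once this block-diagonalization is identified, the Kronecker inversion and the $(M,M_W)\leftrightarrow(\bar{m},\overline{W}_{\boldsymbol{m}})$ bookkeeping are routine.
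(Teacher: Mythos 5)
Your proof is correct, and it reaches \eqref{main result} by a genuinely different reparameterization than the paper. The paper keeps $\mathbf{X}_i$ as is and instead centers the treatment indicator at the $\rho$-weighted average $\overline{W}_{\boldsymbol{m}}(\rho)$ built from the eigenvalues $d_i(\rho)$ of $\mathbf{R}_i(\rho)^{-1}$; the ICC then survives all the way into the blocks $\mathbf{A}_{\boldsymbol{m}}(\rho)$, $\mathbf{B}_{\boldsymbol{m}}(\rho)$, and only cancels at the very end, via the Schur complement $\mathbf{B}_{\boldsymbol{m}}(\rho) - \mathbf{A}_{\boldsymbol{m}}(\rho)\otimes\boldsymbol{\theta}\boldsymbol{\theta}^{\top}$ collapsing to $\frac{1}{1-\rho}\sum_i m_i[\cdot]\otimes(\mathrm{diag}(\boldsymbol{\theta})-\boldsymbol{\theta}\boldsymbol{\theta}^{\top})$ together with the identity $\kappa(\rho)= I\bar{m}\sum_i m_i(W_i-\overline{W}_{\boldsymbol{m}})^2$, which is free of $\rho$. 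You instead center the covariates, $\tilde{\mathbf{X}}_i = \mathbf{X}_i - \mathbf{1}_{m_i}\boldsymbol{\theta}^{\top}$, so that \eqref{fixed prevalence general} gives $\tilde{\mathbf{X}}_i^{\top}\mathbf{1}_{m_i}=\mathbf{0}$ and the rank-one piece of $\boldsymbol{\Sigma}_i^{-1}$ (the only place $\sigma^2_{\gamma}$ enters) is confined to the $(\alpha_1,\alpha_2)$ block, making the information matrix exactly block-diagonal; the cancellation of the ICC is then structural (the random effect lives in the span of $\mathbf{1}_{m_i}$, orthogonal to the centered covariates) rather than the outcome of an algebraic identity, and the remaining computation is a clean $2\times 2$ Kronecker inversion using $\tilde{\mathbf{X}}_i^{\top}\tilde{\mathbf{X}}_i = m_i(\mathrm{diag}(\boldsymbol{\theta})-\boldsymbol{\theta}\boldsymbol{\theta}^{\top})$ and $W_i^2=W_i$. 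Your route makes the ICC-ignorability mechanism more transparent and shorter; the paper's $\rho$-weighted centering is heavier but keeps the $(\beta_1,\beta_2)$ information in the $\rho$-dependent form $\mathbf{A}_{\boldsymbol{m}}(\rho)$, which the authors reuse to discuss $\mathbb{V}ar(\hat{\beta}_2)$ in Remark 5. Your marginalization step (conditional unbiasedness plus the law of total variance over $\boldsymbol{W}$, with the conditional variance depending on $\{\mathbf{X}_i\}$ only through $\boldsymbol{\theta}$) matches the paper's appeal to unbiasedness, and your invertibility remark for $\mathrm{diag}(\boldsymbol{\theta})-\boldsymbol{\theta}\boldsymbol{\theta}^{\top}$ agrees with the paper's Sherman--Morrison remark.
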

\begin{remark}
The result \eqref{main result} is an exact equation rather than an asymptotic or approximate result, i.e. it holds even if $I$ is small.
\end{remark}
\begin{remark}
Using the Sherman–Morrison formula, we have more explicitly that
$$\left(\text{diag}(\boldsymbol{\theta}) - \boldsymbol{\theta}\boldsymbol{\theta}^{\top}\right)^{-1} = \text{diag}(\boldsymbol{\theta})^{-1} + \frac{\mathbf{J}_p}{1 - \mathbf{1}^{\top}_p\boldsymbol{\theta}}.$$
\end{remark}
\begin{remark}
Note that there are $2^I - 2$ possible values of $\boldsymbol{W}\in\{0, 1\}^I$ (with the situations with all clusters in either the control $\overline{W}_{\boldsymbol{m}} = 0$ or intervention arm $\overline{W}_{\boldsymbol{m}} = 1$ excluded) and hence there are at most $2^{I-1} - 1$ possible values for
$$\frac{1}{\overline{W}_{\boldsymbol{m}}(1 - \overline{W}_{\boldsymbol{m}})}.$$
The probability measure $\mathbb{P}$, representing the randomization method, can be characterized by a vector of probabilities with at most $2^{I-1}$ dimension. Hence given the number of clusters $I$, the expectation $\psi(\mathbb{P}; \boldsymbol{m}) < \infty$ due to the finite support of $\overline{W}_{\boldsymbol{m}}$. Moreover, $\psi$ satisfies
\begin{equation}\label{invariant}
\psi(\mathbb{P}; \boldsymbol{m}) = \psi(\mathbb{P}; \lambda\boldsymbol{m})
\end{equation}
for all $\lambda > 0$, which implies that $\psi$ relies on only the relative proportions of the cluster sizes $\boldsymbol{m}$. Hence if $\lambda\in(0, 1)$ represents the attrition rate, it will only affect $\bar{m}$. Thus, 
$$\psi(\mathbb{P}; \boldsymbol{m})\left(\text{diag}(\boldsymbol{\theta}) - \boldsymbol{\theta}\boldsymbol{\theta}^{\top}\right)^{-1}$$
can be seen as the design effect for parallel CRTs under \eqref{fixed prevalence general}.
\end{remark}
\begin{remark}
Under the null hypothesis $H_0:\boldsymbol{\beta}_4 = \mathbf{0}$, a test statistic under \eqref{fixed prevalence general} is asymptotically (see supplementary A for details) given by
$$\frac{I\bar{m}\hat{\boldsymbol{\beta}}^{\top}_4\left(\text{diag}(\boldsymbol{\theta}) - \boldsymbol{\theta}\boldsymbol{\theta}^{\top}\right)\hat{\boldsymbol{\beta}}_4}{\psi(\mathbb{P}; \boldsymbol{m})\sigma^2_{\epsilon}}\sim \chi^2(p).$$
Hence the power can be calculated as
$$\int^{\infty}_{\chi^2_{1 - \alpha}(p)}f(x; p, \vartheta)dx$$
where $\chi^2_{1 - \alpha}(p)$ is the $1-\alpha$ quantile of the $\chi^2(p)$ distribution, $f(x; p, \vartheta)$ is the probability density function of the $\chi^2(p, \vartheta)$ distribution with noncentrality parameter
$$\vartheta = \frac{I\bar{m}\boldsymbol{\beta}^{\top}_4\left(\text{diag}(\boldsymbol{\theta}) - \boldsymbol{\theta}\boldsymbol{\theta}^{\top}\right)\boldsymbol{\beta}_4}{\psi(\mathbb{P}; \boldsymbol{m})\sigma^2_{\epsilon}}.$$
\end{remark}
\begin{remark}
Notice that the variance of the GLS of $\beta_2$ is given by
$$\mathbb{V}ar(\hat{\beta}_2) = \mathbb{E}\left[\frac{1}{\overline{W}_{\boldsymbol{m}}(\rho)\left(1 - \overline{W}_{\boldsymbol{m}}(\rho)\right)}\right]\frac{1}{\displaystyle\sum^I_{i=1}\cfrac{1}{\cfrac{\sigma^2_{\epsilon}}{m_i} + \sigma^2_{\gamma}}}$$
(resulting from the lower-right element of matrix $\sigma^2_{y|x}\mathbf{A}_{\boldsymbol{m}}(\rho)^{-1}$; see Appendix~\ref{app1: proof} for the definitions of $\mathbf{A}_{\boldsymbol{m}}(\rho)$ and $\overline{W}_{\boldsymbol{m}}(\rho)$), we have
$$\mathbb{E}\left[\frac{1}{\overline{W}_{\boldsymbol{m}}(\rho)\left(1 - \overline{W}_{\boldsymbol{m}}(\rho)\right)}\right]\frac{\sigma^2_{\gamma}}{I} < \mathbb{V}ar(\hat{\beta}_2) < \mathbb{E}\left[\frac{1}{\overline{W}_{\boldsymbol{m}}(\rho)\left(1 - \overline{W}_{\boldsymbol{m}}(\rho)\right)}\right]\frac{\sigma^2_{y|x}}{I}.$$
In contrast to $\mathbb{V}ar(\hat{\beta}_2)$, for which we have
$$\mathbb{V}ar(\hat{\beta}_2)\to 0$$
when $I\to\infty$ but
$$\varliminf_{\min\{m_1,\ldots,m_I\}\to\infty}\mathbb{V}ar(\hat{\beta}_2) \ge \frac{\sigma^2_{\gamma}}{I}\varliminf_{\min\{m_1,\ldots,m_I\}\to\infty}\mathbb{E}\left[\frac{1}{\overline{W}_{\boldsymbol{m}}(\rho)\left(1 - \overline{W}_{\boldsymbol{m}}(\rho)\right)}\right] \ge \frac{4\sigma^2_{\gamma}}{I} > 0,$$
we have by Theorem 1 that
$$\mathbb{V}ar(\hat{\boldsymbol{\beta}}_4)\to \mathbf{0}$$
when either $I\to\infty$ or $\bar{m}\to\infty$ given that the limit of $\psi(\mathbb{P}; \boldsymbol{m})$ exists, which indicates that for null hypothesis $H_0:\boldsymbol{\beta}_4 = \mathbf{0}$ of the HTE the power of the Wald test can be arbitrarily large by increasing either the number of clusters $I$ or the cluster sizes $\boldsymbol{m}$. Hence if the current ICC-ignorable parallel CRT for detecting HTE has insufficient power while expanding the number of clusters is difficult or costly, we may improve power through increasing the cluster sizes rather than the number of clusters.
\end{remark}
From Theorem~\ref{thm1} we can see that neither $\rho$ nor $\sigma^2_{\gamma}$ appears in \eqref{main result} or \eqref{result general}, i.e. the cluster random effect is completely irrelevant to the power/sample size calculation based on Wald tests. Note that only $\boldsymbol{m}$, $\theta$, and $\sigma^2_{\epsilon}$ are included as arguments of $\boldsymbol{\Omega}_4$, however, the randomization of the parallel CRT also affects the value of $\boldsymbol{\Omega}_4(\boldsymbol{m}, \boldsymbol{\theta}, \sigma^2_{\epsilon}, \mathbb{P})$. In the next section we shall discuss the component $ \psi(\mathbb{P}; \boldsymbol{m})$ under the random allocation rule and its approximation.

\subsection{Random allocation rule}

The random allocation rule specifies that $I_1$ out of $I$ clusters and $I_0$ out of $I$ clusters are in the intervention and control arms, respectively (usually, $I_1 = I_0 = I/2$) \citep{lachin1988properties}. Although not mentioned explicitly, this type of randomization is widely-used in many studies and protocols.\citep{mitja2021cluster,wolfe2009school,ogedegbe2014cluster}. As the simplest possible case of the permuted-block randomization, it is recommended for trials with $I<100$ in order to avoid treatment imbalances \citep{lachin1988randomization}.

Under the random allocation rule, if $\boldsymbol{m} = \bar{m}\mathbf{1}_I$, then 
$$\overline{W}_{\boldsymbol{m}} \equiv \frac{I_1}{I}.$$
Thus,
$$\mathbb{V}ar(\hat{\beta}_4) = \mathbb{V}ar(\hat{\beta}_4|\{(W_i, \boldsymbol{X}_i)\}^I_{i=1}) = \frac{I\sigma^2_{\epsilon}}{\bar{m}\theta(1 - \theta)I_0I_1}$$
under \eqref{fixed prevalence}. 

For the cases with unequal cluster sizes, under \eqref{fixed prevalence general} there are 
$$\displaystyle{I\choose I_1}=\displaystyle{I\choose I_0}$$
different combinations of $W_1,\ldots,W_I$ satisfying
$$\sum^I_{i=1}W_i = I_1,$$
which leads to
$$\psi(\mathbb{P}; \boldsymbol{m}) = \frac{1}{\displaystyle{I\choose I_1}}\sum_{\boldsymbol{W}\in\mathcal{W}_I}\frac{1}{\overline{W}_{\boldsymbol{m}}(1 - \overline{W}_{\boldsymbol{m}})},$$
where
$$\mathcal{W}_I = \left\{\boldsymbol{W}\in\{0, 1\}^I: \sum^I_{i=1}W_i = I_1\right\}.$$
For small values of $n$, $\psi(\mathbb{P}; \boldsymbol{m})$ can be computed using \emph{combn} function or \emph{comboGeneral} function of R package ``RcppAlgos''. However, as $n$ increases this method will be ultimately limited by the RAM allowance. Hence we may need to approximate this quantity for parallel CRTs with moderate and large number of clusters $I$. To this end, we first introduce the following Proposition~\ref{General moments} (see supplementary B for the proof).
\begin{proposition}\label{General moments}
Assume the random allocation rule is employed by a parallel CRT such that $I_1$ out of all $I$ ($I \ge 4$) clusters are assigned to the intervention arm. Then the 2nd central moment (variance) of $\overline{W}_{\boldsymbol{m}}$ is given by
\begin{equation}\label{General 2nd}
\mathbb{V}ar(\overline{W}_{\boldsymbol{m}}) = \frac{I_1I_0}{I^2(I - 1)}\text{CV}^2_{F_I}
\end{equation}
and
the 4th central moment of $\overline{W}_{\boldsymbol{m}}$ is given by
\begin{equation}\label{General 4th simplify}
\mathbb{E}\left[\left(\overline{W}_{\boldsymbol{m}} - \frac{I_1}{I}\right)^4\right] = \left(\frac{I^2 - 6I_0I_1 + I}{n}\text{Kurt}_{F_I} + 3(I_1 - 1)(I_0 - 1)\right)\frac{I_1I_0\text{CV}^4_{F_I}}{I^3(I - 1)(I - 2)(I - 3)},
\end{equation}
where $\text{CV}_{F_I}$ is the empirical coefficient of variation of $\boldsymbol{m}$ defined via
$$\text{CV}^2_{F_I} = \frac{1}{I\bar{m}^2}\sum^I_{i=1}m^2_i - 1$$
and $\text{Kurt}_{F_I}$ is the empirical Kurtosis of $\boldsymbol{m}$ defined via
$$\text{Kurt}_{F_I} = \frac{\cfrac{1}{I}\left(\displaystyle\sum^I_{i=1}m^4_i - 4\bar{m}\displaystyle\sum^I_{i=1}m^3_i + 6\bar{m}^2\displaystyle\sum^I_{i=1}m^2_i - 3\bar{m}^4\right)}{\left(\cfrac{1}{I}\displaystyle\sum^I_{i=1}m^2_i - \bar{m}^2\right)^2}.$$
\end{proposition}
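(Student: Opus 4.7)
The plan is to exploit the fact that under the random allocation rule, $\boldsymbol{W}$ is uniform on $\mathcal{W}_I = \{\boldsymbol{w}\in\{0,1\}^I:\mathbf{1}^\top_I\boldsymbol{w} = I_1\}$, which is the classical urn model for sampling $I_1$ objects without replacement from $I$. The standard hypergeometric calculation gives, for any distinct indices $i_1,\ldots,i_r$,
$$\mathbb{E}\Bigl[\prod_{s=1}^{r}W_{i_s}\Bigr] = \frac{I_1(I_1-1)\cdots(I_1-r+1)}{I(I-1)\cdots(I-r+1)},$$
and the hypothesis $I\ge 4$ is precisely what makes this well-defined up to $r=4$. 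From these the centered mixed moments of $\widetilde{W}_i = W_i - I_1/I$ are obtained by binomial expansion of $(W_i - p)^{a}$ with $p = I_1/I$. With these ingredients I would treat
$$\overline{W}_{\boldsymbol{m}} - \tfrac{I_1}{I} = \frac{1}{I\bar{m}}\sum_{i=1}^{I}m_i\widetilde{W}_i$$
as a fixed linear combination of the $\widetilde{W}_i$'s and compute its second and fourth raw moments by direct expansion.

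For \eqref{General 2nd} the computation is short: the marginal variance $\mathbb{V}ar(W_i) = I_1I_0/I^2$ and the pairwise covariance $\mathbb{C}ov(W_i,W_j) = -I_1I_0/[I^2(I-1)]$ give
$$\mathbb{V}ar\Bigl(\sum_i m_iW_i\Bigr) = \frac{I_1I_0}{I^2}\sum_i m_i^2 - \frac{I_1I_0}{I^2(I-1)}\sum_{i\ne j}m_im_j,$$
which simplifies via $\sum_{i\ne j}m_im_j = (I\bar{m})^2 - \sum_i m_i^2$ and, after division by $(I\bar{m})^2$, reduces to the claimed expression upon recognising $\sum_i m_i^2/(I\bar{m}^2) - 1 = \text{CV}^2_{F_I}$.

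The fourth central moment is markedly more intricate. I would expand
$$\Bigl(\sum_i m_i\widetilde{W}_i\Bigr)^4 = \sum_{i_1,i_2,i_3,i_4} m_{i_1}m_{i_2}m_{i_3}m_{i_4}\,\widetilde{W}_{i_1}\widetilde{W}_{i_2}\widetilde{W}_{i_3}\widetilde{W}_{i_4}$$
and partition the ordered quadruples by their coincidence pattern $(4),(3,1),(2,2),(2,1,1),(1,1,1,1)$ with the appropriate multinomial weights. Each pattern contributes a symmetric polynomial in $(m_1,\ldots,m_I)$ that reduces to a combination of the power sums $T_k = \sum_i m_i^k$ for $k\le 4$, multiplied by a centered joint moment of the $\widetilde{W}$'s, which by the hypergeometric identity above is an explicit rational function of $(I,I_0,I_1)$. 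A useful cross-check throughout is the deterministic identity $\sum_i\widetilde{W}_i = 0$, which yields relations such as $\sum_{j\ne i}\mathbb{E}[\widetilde{W}_i^a\widetilde{W}_j] = -\mathbb{E}[\widetilde{W}_i^{a+1}]$ and can shorten some of the moment evaluations.

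The main obstacle is the final algebraic repackaging. After substitution one obtains an expression of the form $B_4T_4 + B_3\bar{m}T_3 + B_2\bar{m}^2T_2 + B_0\bar{m}^4I$ for explicit rational coefficients $B_k(I,I_0,I_1)$, and one must verify — after dividing by $(I\bar{m})^4$ and factoring out $I_1I_0/[I^3(I-1)(I-2)(I-3)]$ — that this rearranges into the announced linear combination of the kurtosis numerator $\tfrac{1}{I}(T_4-4\bar{m}T_3+6\bar{m}^2T_2-3\bar{m}^4)$ and the squared centered second moment $(T_2/I-\bar{m}^2)^2$, with coefficients $(I^2-6I_0I_1+I)/I$ and $3(I_1-1)(I_0-1)$ respectively. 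Pinning down these coefficients precisely, and verifying that no extraneous symmetric combinations of the $T_k$'s survive, is the mechanically demanding but conceptually routine heart of the proof.
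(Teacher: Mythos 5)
Your identification of the random allocation rule with simple random sampling without replacement, and the resulting factorial-moment formula $\mathbb{E}[W_{i_1}\cdots W_{i_r}] = \frac{I_1(I_1-1)\cdots(I_1-r+1)}{I(I-1)\cdots(I-r+1)}$, is exactly the right engine, and your second-moment derivation is complete and correct: the covariance $-I_1I_0/[I^2(I-1)]$ together with $\sum_{i\ne j}m_im_j = (I\bar m)^2 - \sum_i m_i^2$ does reduce to \eqref{General 2nd}. (The paper's own derivation is relegated to Supplementary B, which is not reproduced here, but it can only proceed along these same finite-population-sampling lines.)

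For the fourth moment, however, what you have is a plan, not a proof, and the proposition's entire content at fourth order \emph{is} the pair of coefficients $(I^2-6I_0I_1+I)/I$ (the ``$n$'' in the display is $I$) and $3(I_1-1)(I_0-1)$; you stop precisely where these must be established. Worse, the intermediate form you announce, $B_4T_4 + B_3\bar m T_3 + B_2\bar m^2T_2 + B_0\bar m^4 I$, cannot be the outcome of the expansion: the coincidence patterns $(2,2)$, $(2,1,1)$ and $(1,1,1,1)$ contribute terms proportional to $T_2^2$, and indeed the target itself contains $\mathrm{CV}^4_{F_I}\propto(T_2/I-\bar m^2)^2$, which is quadratic in $T_2$; an expression linear in the power sums can never rearrange into it, so the ``final algebraic repackaging'' as you describe it would fail. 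The clean way to execute your own plan is to write $\overline W_{\boldsymbol m} - I_1/I = \frac{1}{I\bar m}\sum_i c_i\widetilde W_i$ with \emph{centered} weights $c_i = m_i-\bar m$ (legitimate since $\sum_i\widetilde W_i=0$); then the pattern-by-pattern sums collapse to $\sum_i c_i^4 = I\mu_4$ and $\big(\sum_i c_i^2\big)^2 = I^2\mu_2^2$ only (all odd-moment terms cancel), and combining these with the joint central moments $\mathbb{E}[\widetilde W_i^4]$, $\mathbb{E}[\widetilde W_i^3\widetilde W_j]$, $\mathbb{E}[\widetilde W_i^2\widetilde W_j^2]$, $\mathbb{E}[\widetilde W_i^2\widetilde W_j\widetilde W_k]$, $\mathbb{E}[\widetilde W_i\widetilde W_j\widetilde W_k\widetilde W_l]$ from the hypergeometric identity yields \eqref{General 4th simplify} (which is correct: e.g.\ for $I=4$, $I_1=I_0=2$, $\boldsymbol m=(1,1,1,3)^\top$ both sides equal $1/1296$). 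Until that computation is actually carried out and the two coefficients verified, the fourth-moment claim remains unproved.
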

Proposition~\ref{General moments} allows us to approximate $\psi(\mathbb{P}; \boldsymbol{m})$ through the geometric series. To see this, let us consider the typical situation with $I = 2I_1 = 2I_0$. In this case, we have
$$\mathbb{E}[\overline{W}_{\boldsymbol{m}}] = \frac{1}{I\bar{m}}\sum^I_{i=1}m_i\mathbb{E}[W_i] = \frac{1}{2}$$
and hence the variance \eqref{General 2nd} reduces to
$$\mathbb{V}ar(\overline{W}_{\boldsymbol{m}}) = \frac{\text{CV}^2_{F_I}}{4(I - 1)}$$
and the 4th central moment \eqref{General 4th simplify} reduces to
$$\mathbb{E}\left[\left(\overline{W}_{\boldsymbol{m}} - \frac{1}{2}\right)^4\right] = \frac{[3(I - 2) - 2\text{Kurt}_{F_I}]}{16I(I - 1)(I - 3)}\text{CV}^4_{F_I}.$$
Now if we denote
$$C_{\boldsymbol{m}} := \overline{W}_{\boldsymbol{m}} - \frac{1}{2}\in\left(-\frac{1}{2}, \frac{1}{2}\right),$$
then
$$\frac{1}{\overline{W}_{\boldsymbol{m}}(1 - \overline{W}_{\boldsymbol{m}})} = \frac{1}{\left(0.5 + C_{\boldsymbol{m}}\right)(0.5 - C_{\boldsymbol{m}})} = \frac{1}{0.25 - C^2_{\boldsymbol{m}}} = 4\sum^{\infty}_{k=0}4^kC^{2k}_{\boldsymbol{m}}.$$
Thus, under the random allocation rule, $\psi(\mathbb{P}; \boldsymbol{m})$ can be approximated by
\begin{align}\label{approx}
\psi(\mathbb{P}; \boldsymbol{m}) & \approx 4 + 16\mathbb{E}\left[C^2_{\boldsymbol{m}}\right] + 64\mathbb{E}\left[C^4_{\boldsymbol{m}}\right] = 4 + 16\mathbb{V}ar(\overline{W}_{\boldsymbol{m}}) + 64\mathbb{E}\left[\left(\overline{W}_{\boldsymbol{m}} - \frac{1}{2}\right)^4\right] \nonumber\\
&= 4\left(1 + \frac{\text{CV}^2_{F_I}}{I - 1} + \frac{[3(I - 2) - 2\text{Kurt}_{F_I}]\text{CV}^4_{F_I}}{I(I - 1)(I - 3)}\right).
\end{align}
\begin{remark}
Note that Eq.~\eqref{approx} indicates that $\psi(\mathbb{P}; \boldsymbol{m})\to 4$ as $I\to\infty$ when $\displaystyle\varlimsup_{I\to\infty}\text{CV}^2_{F_I} < \infty$, for which the asymptotic normality of $\hat{\boldsymbol{\beta}}_4$ (Proposition 1 in supplementary A) holds.
\end{remark}

\section{Simulation studies}\label{sec4: simulation}

Simulation studies in this section focus on two goals: first is to verify that the ICC $\rho$ is irrelavant to the power/sample size calculation; second is to evaluate the performance of approximation of $\psi(\mathbb{P}; \boldsymbol{m})$ given by \eqref{approx} in terms of the operating characteristics. Throughout this section, we assume patients are recruited from 8$q$ (where $q$ is a positive integer) clusters with the cluster sizes given by
$$\boldsymbol{m} = \mathbf{1}_{q}\otimes\begin{bmatrix} \dfrac{\bar{m}}{2} & \dfrac{\bar{m}}{2} & \dfrac{\bar{m}}{2} & \dfrac{\bar{m}}{2} & \bar{m} & \dfrac{5\bar{m}}{2} & 2\bar{m} & \dfrac{\bar{m}}{2} \end{bmatrix}^{\top}$$
(where $\otimes$ denotes the Kronecker product) in order to explore the performance of \eqref{approx} by varying the number of clusters. Moreover, we fix $p=1$, $\beta_1 = 0.15$, $\beta_2 = 0.25$, $\beta_3 = 0.1$, and $\sigma^2_{\epsilon} = 1$ in the data-generating process \eqref{model} while vary the ICC $\rho\in\{0.05, 0.5, 0.95\}$ to represent the weak, moderate, strong intra-class correlation, respectively. The HTE $\beta_4 = \Delta \in\{0.25, 0.35, 0.45\}$ is considered to illustrate the change of calculated power \citep{yang2020sample}. And $\bar{m}$ is used to represent the required sample size due to the invariant property provided by \eqref{invariant}. The covariates $\boldsymbol{W}$ and $\boldsymbol{X}_i$'s are simulated using the \emph{sample} function in R to fix the number of clusters in each arm as well as the number of both reference and target groups within each cluster. The simulated outcomes are fitted by the \emph{lme} function of R package ``nlme''. All analysis is implemented by R (version 4.4.1).

\subsection{The empirical SE vs. the computed SE}

We first evaluated the performance computed standard error (CSE) of $\hat{\beta}_4$ using Monte Carlo simulation. By setting $\theta = 0.5$, according to \eqref{result general} the CSE is given by
\begin{equation}\label{CSE}
\text{CSE} = \sqrt{\frac{4}{I\bar{m}}\psi(\mathbb{P}; \boldsymbol{m})}
\end{equation}
where $\psi(\mathbb{P}; \boldsymbol{m})$ is approximated by \eqref{approx}. By varying $\bar{m}\in\{20, 40, 60\}$ and $q\in\{1, 2, 3\}$ the resulting CSEs are shown in Table~\ref{tab1: standard deviation}. 

Next, we set $\beta_4 = 0.35$ and simulate 10,000 datasets through the data-generating process \eqref{model}. By fitting these 10,000 datasets with LMM we obtain 10,000 GLS estimations of $\beta_4$ which are denoted as $\hat{\beta}^{(1)}_4,\ldots,\hat{\beta}^{(10,000)}_4$. The empirical standard error (ESD), which is considered as the true standard deviation of $\beta_4$ \citep{yang2024power,ford2020maintaining}, is defined as
$$\text{ESD} = \sqrt{\frac{1}{9,999}\sum^{10,000}_{l=1}\left(\hat{\beta}^{(l)}_4 - \frac{1}{10,000}\sum^{10,000}_{l=1}\hat{\beta}^{(l)}_4\right)^2}.$$

Besides, for each simulated dataset the LMM provides an estimation of the standard error for the GLS estimator that is denoted as $\hat{\text{se}}\left(\hat{\beta}^{(l)}_4\right)$ for $l=1,\ldots,10,000$. Hence the average standard error
$$\overline{\text{SE}} = \frac{1}{10,000}\sum^{10,000}_{l=1}\hat{\text{se}}\left(\hat{\beta}^{(l)}_4\right)$$
is used to create performance measure such as the empirical SE percent bias \citep{ford2020maintaining}. The values of ESD and $\overline{\text{SE}}$ for each combination of $\bar{m}\in\{20, 40, 60\}$, $q\in\{1, 2, 3\}$, and $\rho\in\{0.05, 0.5, 0.95\}$ are also presented in Table~\ref{tab1: standard deviation}  (similar simulation studies for verifying Eq.~\eqref{main result} can be found in Supplementary C).

From Table~\ref{tab1: standard deviation}, the ICC $\rho$ plays no role in either the simulated ESD or the simulated $\overline{\text{SE}}$. Moreover, the CSD is almost identical to the average estimated standard error fitted by the LMM particularly when either $\bar{m}$ or $I = 8q$ is sufficiently large. Hence our proposed method to approximate the standard deviation of $\hat{\beta}_4$ based on \eqref{approx} shall share similar performance with those fitted by the LMM.

\subsection{Pre-randomization operating characteristics}

In this section, we the required average sample sizes corresponding to prespecified powers as well as the resulting empirical type-I errors and powers from simulation in ICC-ignorable parallel CRTs. Throughout this section, we fix the significance level $\alpha = 0.05$ and required power $0.8$ to calculate the minimal required average cluster size $\bar{m}$ by assuming 8 clusters ($q = 1$). Since
$$\mathbb{V}ar(\hat{\beta}_4) = \frac{4.380022}{8\bar{m}\theta(1 - \theta)} = \frac{0.5475028}{\bar{m}\theta(1 - \theta)}$$
according to \eqref{fixed prevalence general} and \eqref{approx}, the predicted power of test $H_0: \beta_4 = 0$ vs. $H_1: \beta_4 = \Delta$ is calculated as
\begin{equation}\label{power prerandomization}
\phi = \Phi\left(z_{0.025} + \lvert\Delta\rvert\sqrt{\frac{\bar{m}\theta(1 - \theta)}{0.5475028}}\right)
\end{equation}
based on the Wald test and thus the required average cluster size is
\begin{equation}\label{average size: PREP}
\bar{m}\ge \frac{0.5475028\left(z_{0.975} + z_{0.8}\right)^2}{\theta(1 - \theta)\Delta^2}
\end{equation}
where $z_*$ is the *-quantile of the standard normal distribution.

To ensure the proportion of the targe group within each cluster is exactly $\theta$, we choose $\bar{m}$ as multiples of 20, 10, and 4 that are closest to the value of
$$\frac{0.5475028\left(z_{0.975} + z_{0.8}\right)^2}{\theta(1 - \theta)\Delta^2}$$
corresponding to $\theta = 0.3$, $0.4$, and $0.5$ respectively. The resulting $\bar{m}$ is then used to calculate the predicted power $\phi$. The values of $(\bar{m}, \phi)$ for each combination of $\theta\in\{0.3, 0.4, 0.5\}$ and $\Delta\in\{0.25, 0.35, 0.45\}$ are provided in Table~\ref{tab2: pre randomization}.

The empirical type-I error $\psi_0$ and empirical power $\phi_0$ are calculated as the proportion of significant GLS estimations of $\beta_4$ among 10,000 results based on the Monte Carlo simulation, i.e.
$$\psi_0, \phi_0 = \frac{1}{10,000}\sum^{10,000}_{l = 1}\mathbf{1}\left\{\lvert\hat{\beta}^{(l)}_4\rvert > z_{0.975}\hat{\text{se}}\left(\hat{\beta}^{(l)}_4\right)\right\},$$
where $\mathbf{1}\{\bullet\}$ is the indicator function. The only difference is that $\beta_4 = 0$ in the data-generating process \eqref{model} when computing the emprical type-I error $\psi_0$ while $\beta_4 = \Delta\in\{0.25, 0.35, 0.45\}$ in the data-generating process \eqref{model} when computing the empirical power $\phi_0$. The values of $\psi_0$ and $\phi_0$ for each combination of $\theta\in\{0.3, 0.4, 0.5\}$, $\Delta\in\{0.25, 0.35, 0.45\}$, and $\rho\in\{0.05, 0.5, 0.95\}$ are also presented in Table~\ref{tab2: pre randomization} (similar simulation studies for power and sample size calculations based on Eq.~\eqref{main result} can be found in Supplementary D).

Again, the ICC $\rho$ plays no role in either the empirical type-I error $\psi_0$ or the empirical power $\phi_0$ according to Table~\ref{tab2: pre randomization}. Moreover, the calculated average cluster size $\bar{m}$ based on our proposed methods \eqref{fixed prevalence general} and \eqref{approx} can roughly generate the desired empirical power $\psi_0$ which is closed to the predicted power $\phi$ for all scenarios listed in Table~\ref{tab2: pre randomization}. In addition, the empirical type-I error $\psi_0$ is well-controlled under all scenarios.

\subsection{Cluster sizes vs. number of clusters}

When ICC is ignorable in power calculation, the power can be increased to any desirable level by either increasing the number of clusters or the cluster sizes. In this section, we shall investigate how the required average sample size $\bar{m}$ varies according to the change of the number of clusters. By fixing $\theta = 0.5$, the predicted power is given by
\begin{equation}\label{power q}
\phi = \Phi\left(z_{0.025} + \frac{\lvert\Delta\rvert}{\text{CSE}}\right)
\end{equation}
where CSE is given by \eqref{CSE} and thus the required average cluster size is
\begin{equation}\label{average size: q}
\bar{m} \ge \frac{\psi(\mathbb{P};\boldsymbol{m})\left(z_{0.975} + z_{0.8}\right)^2}{2q\Delta^2}.
\end{equation}
To ensure the the targe group and reference group are with 1:1 ratio within each cluster, we choose $\bar{m}$ as multiples of 4 that are closest to the value of
$$\frac{\psi(\mathbb{P};\boldsymbol{m})\left(z_{0.975} + z_{0.8}\right)^2}{2q\Delta^2}.$$
The resulting $\bar{m}$ is then used to calculate the predicted power $\phi$ using \eqref{power q}. The values of $(\bar{m}, \phi)$ for each combination of $q\in\{2, 3, 4\}$ and $\Delta\in\{0.25, 0.35, 0.45\}$ are provided in Table~\ref{tab3: pre randomization mn}. The values of empirical type-I error $\psi_0$ and empirical power $\phi_0$ for each combination of $q\in\{2, 3, 4\}$, $\Delta\in\{0.25, 0.35, 0.45\}$, and $\rho\in\{0.05, 0.5, 0.95\}$ are also presented in Table~\ref{tab3: pre randomization mn}.

The ICC $\rho$ still plays no role in either the empirical type-I error $\psi_0$ or the empirical power $\phi_0$ according to Table~\ref{tab3: pre randomization mn}. Moreover, given the required power, the calculated average cluster size $\bar{m}$ and the number of clusters $I$ roughly satisfies 
\begin{equation}\label{equalizer}
I\bar{m} = \frac{\psi(\mathbb{P};\boldsymbol{m})\left(z_{0.975} + z_{0.8}\right)^2}{\Delta^2},
\end{equation}
although the total sample size tends to be slightly smaller for larger number of clusters. The results of empirical type-I error $\psi_0$ and empirical power $\phi_0$ are similar as in Table~\ref{tab2: pre randomization}.

\subsection{Drop-out scenario}

In this section, we shall investigate how the drop-out rate affect the empirical type-I error and power. Throughout this section, we fix the significance level $\alpha = 0.05$ and required power 0.8 to calculate the minimal required average cluster size $\bar{m}$ by assuming 8 clusters ($q = 1$) and $\theta = 0.5$. Denote the drop-out rate as $r\in (0, 1)$. Then the actual number of observations available is $I\bar{m}(1 - r)$. Now let $K$ be the number of actual observations from the target subgroup. Then $K$ is a hypergeometric $\left(I\bar{m}, I\bar{m}\theta, I\bar{m}(1 - r)\right)$ random variable such that
$$\mathbb{E}[K] = I\bar{m}(1 - r)\theta,\quad \mathbb{V}ar(K) = \frac{(I\bar{m})^2r(1 - r)\theta(1 - \theta)}{I\bar{m} - 1}\approx I\bar{m}r(1 - r)\theta(1 - \theta).$$
Next, we assume that given $K$ the numbers of actual observations from the target subgroup within each cluster $M_1,\ldots,M_I$ is a multinomial $\left(K, \dfrac{m_1}{I\bar{m}}, \ldots, \dfrac{m_I}{I\bar{m}}\right)$ random vector and the numbers of actual observations from the reference subgroup within each cluster $N_1,\ldots,N_I$ is a multinomial $\left(I\bar{m}(1 - r) - K, \dfrac{m_1}{I\bar{m}}, \ldots, \dfrac{m_I}{I\bar{m}}\right)$ random vector. Then the required average cluster size $\bar{m}$ could be derived from
$$\bar{m} \ge \frac{1}{I}\sum^I_{i=1}\mathbb{E}\left[\frac{0.5475028\left(z_{0.975} + z_{0.8}\right)^2(M_i + N_i)^2}{(1 - r)M_iN_i\Delta^2}\right]$$
instead of \eqref{average size: PREP} to account for variation of $\theta$ across all clusters due to the drop-out. In the end, $\bar{m}$ can be approximately obtained by solving the following equation:
\begin{equation}\label{average size: drop-out}
\bar{m} = \frac{0.5475028\left(z_{0.975} + z_{0.8}\right)^2}{(1 - r)\Delta^2}\left\{\frac{1}{\theta(1 - \theta)} + \frac{\theta^3 + (1 - \theta)^3}{I\bar{m}(1 - r)(1 - \theta)^2\theta^2}\left[r + \frac{1}{I}\sum^I_{i=1}\frac{I\bar{m} - m_i}{m_i}\right]\right\}
\end{equation}
(see supplementary E for details). In our current settings, the above equation of $\bar{m}$ reduces to
\begin{equation}\label{average size: drop-out this case}
\bar{m} = \frac{0.5475028\left(z_{0.975} + z_{0.8}\right)^2}{(1 - r)\Delta^2}\left[4 + \frac{r + 5.45}{2\bar{m}(1 - r)}\right].
\end{equation}
Note that the equation \eqref{average size: drop-out this case} of $\bar{m}$ has a unique positive solution, with which we may roughly predict the power as
\begin{equation}\label{power: drop-out}
\phi = \Phi\left(z_{0.025} + \lvert\Delta\rvert\sqrt{\frac{(1 - r)\bar{m}}{0.5475028}\left[4 + \frac{r + 5.45}{2\bar{m}(1 - r)}\right]^{-1}}\right)
\end{equation}
based on the Wald test. To ensure the number of available observations is an integer, we choose $\bar{m}$ as multiples of 10, 4, and 10 that are closest to the unique positive solution to \eqref{power: drop-out}corresponding to scenarios $r=0.2$, $0.25$, and $0.3$. Hence it remains to examine how the ICC may affect the empirical power and type-I error of our proposed ICC-ignorable design with drop-out rate $r$. To simulate the real situation with drop-out rate $r$, the values of $\psi_0$ and $\varphi_0$ for each combination of $r\in\{0.2, 0.25, 0.3\}$, $\Delta\in\{0.25, 0.35, 0.45\}$, and $\rho\in\{0.05, 0.5, 0.95\}$ are also presented in Table~\ref{tab4: drop-out}.

According to Table~\ref{tab4: drop-out}, the ICC $\rho$ does not substantially affect the empirical power or type-I error. Moreover, the predicted power given by \eqref{power: drop-out} provides reasonable approximation for the actual power for all scenarios.

\section{Applications}\label{sec5: application}

This section provides examples of power and sample size calculation for parallel CRTs that assume ignorable ICC using data from three actual clinical trials: the RECODE trial \citep{kruis2013recode}, the PARTNER study \citep{PARTNER}, and the EPIC study \citep{EPIC}.

\subsection{RECODE Trial: Minimum Detectable HTE for 80\% power}\label{subsec: application1}

The RECODE trial is a two-arm, multi-center, pragmatic CRT investigating the effectiveness of integrated disease management (IDM) on the quality of life of patients with chronic obstructive pulmonary disease (COPD) \citep{kruis2014effectiveness}. This study randomly assigned 1,086 patients from $I = 40$ general practices to IDM (intervention) or usual care (control). The randomization used matching with variables such as percentage of minority patients, type of practice, practice location, age, and sex of practitioner. This randomization led to 1:1 ratio of intervention and control ($I_0 = I_1 = 20$). For illustration purpose, we ignored the matching steps because we cannot include the covariates in the data generating LMM in equation~\eqref{model}. Moreover, we simplify the permuted block randomization to the random allocation rule to illustrate how the power is calculated for HTE detection without the knowledge of the ICC.

The study analyzed the Clinical COPD Questionnaire in patients with COPD, finding a standard deviation $\sigma_{\epsilon} = 0.49$ at 12 months. A planned subgroup analysis examined the HTE based on Medical Research Council (MRC) dyspnea score. To detect HTE between MRC groups (1-2 versus 3-5: 66.6\% vs. 33.4\%)\citep{kruis2013recode}, the study required 1,080 participants for 80\% accounting for loss-to-follow-up and using a minimal clinically important difference (MCID) of -0.4 for the CCQ \citep{kruis2013recode}. However, the true HTE ($\Delta$) and the ICC $\rho$ were unknown. By applying our ICC-ignorable design, we were able to describe the power $\varphi$ as a function of the absolute value of $\Delta$. According to Table 3 of \citet{kruis2014effectiveness}, we may roughly set $\theta = 1 / 3$ for our ICC-ignorable design. Under the equal cluster size $\bar{m} = 27$ for the $40$ clusters, the power function under our ICC-ignorable design, is given by
$$\phi(\lvert\Delta\rvert) = \Phi\left(z_{0.025} + \frac{\lvert\Delta\rvert\sqrt{60}}{0.49}\right),$$
which is depicted by the solid line in Figure~\ref{Fig1a}. 

\begin{figure*}[!h]%
    \centering
    \begin{subfigure}{0.5\textwidth}
        \centering
        \includegraphics[width=.99\textwidth,height=.77\textwidth]{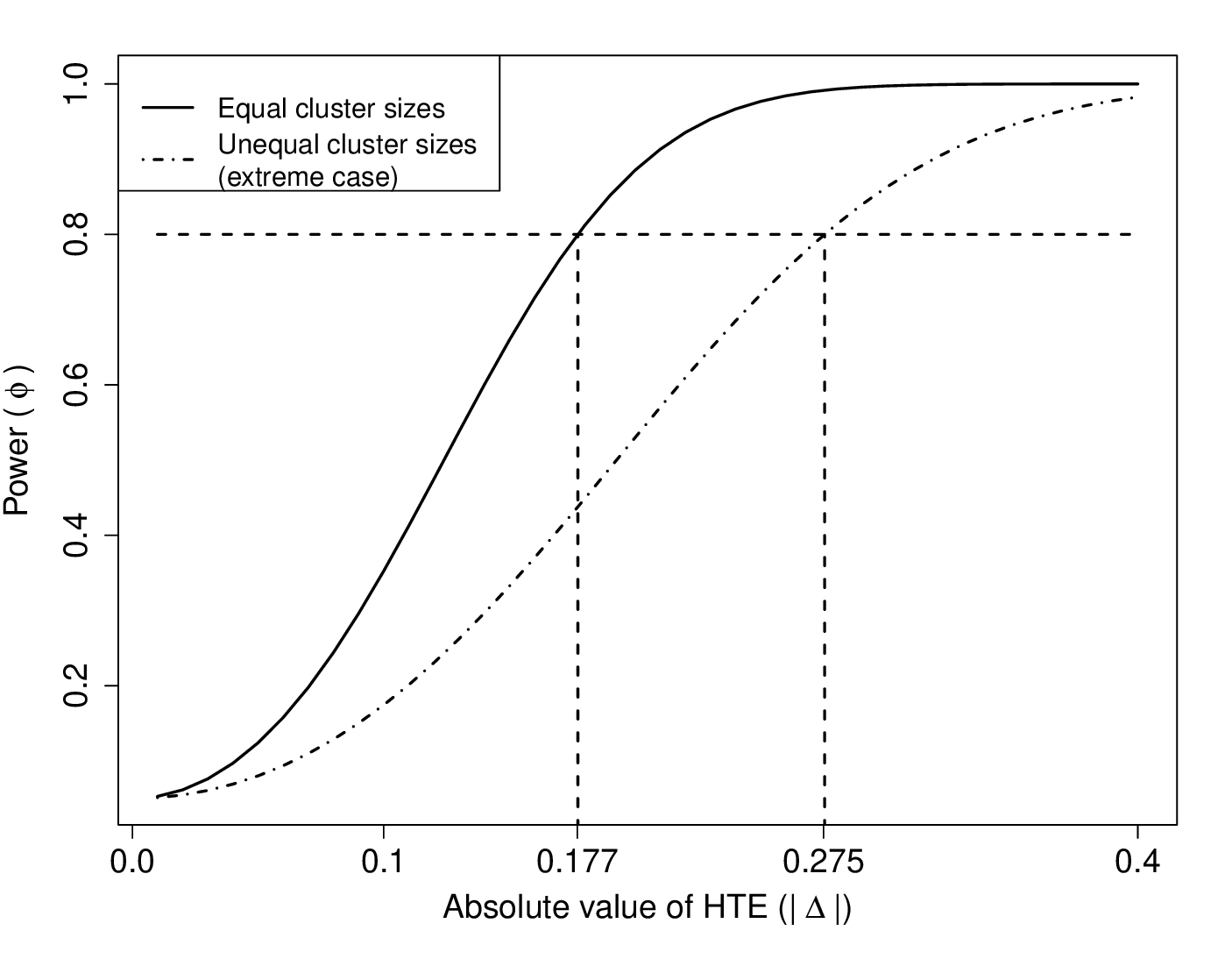}
        \caption{The power as a function of the absolute value of the HTE, given the average cluster size $\bar{m}=27$.}\label{Fig1a}
    \end{subfigure}%
%    ~ 
    \begin{subfigure}{0.5\textwidth}
        \centering
        \includegraphics[width=.99\textwidth,height=.77\textwidth]{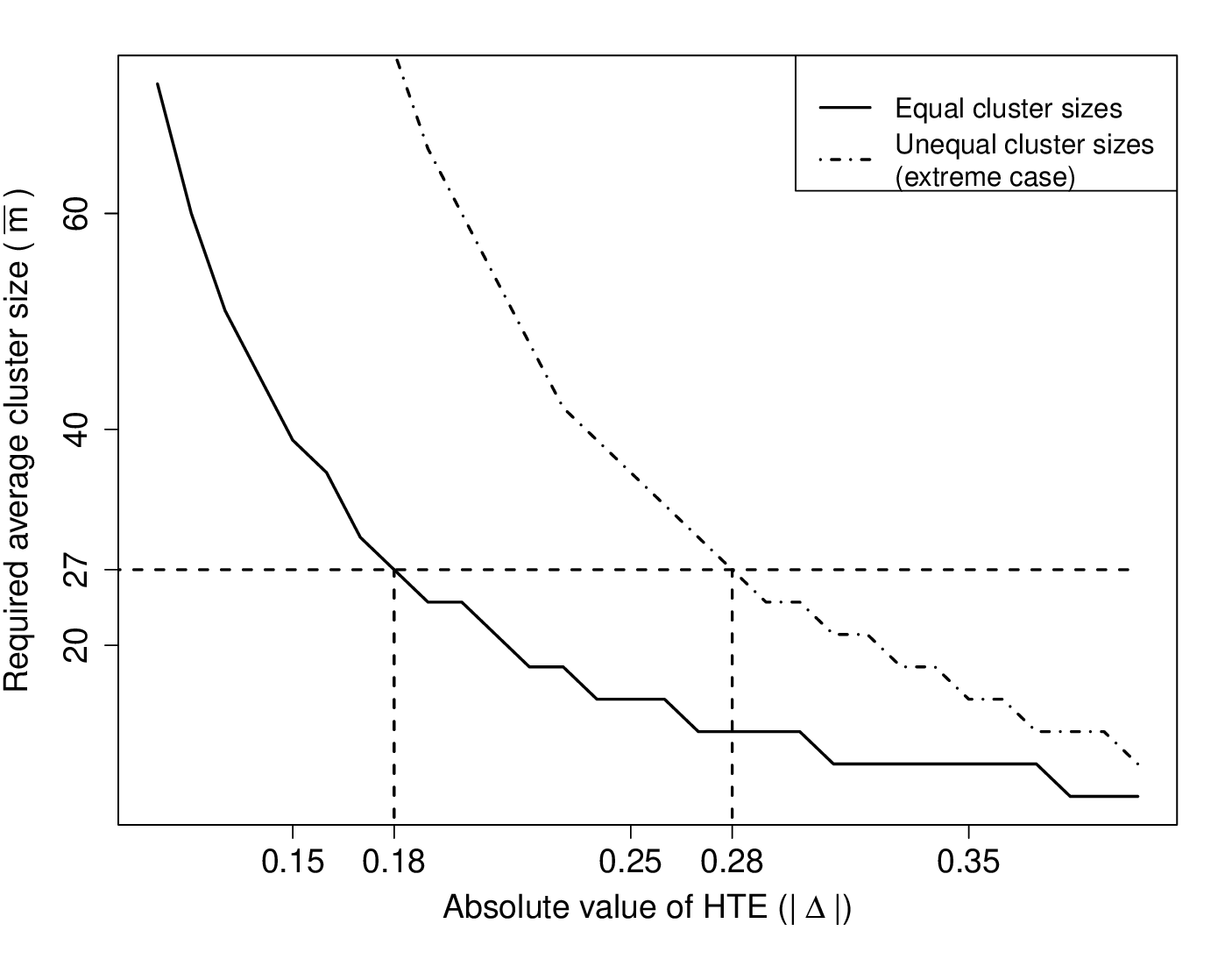}
        \caption{The required average cluster size as a function of the absolute value of the HTE given $80\%$ power.}\label{Fig1b}
    \end{subfigure}
    \caption{The power and required average cluster size of ICC-ignorable parallel CRT design with number of clusters.}
\end{figure*}

Note that the case of equal cluster size is an optimal ICC ignorable design in the sense that the power is maximized given the total sample size 1,080. Hence, we also consider it an extreme case with $m_1 = \cdots = m_{39} = 3$ patients while $m_{40} = 963$ patients. The extreme case leads to $\psi(\mathbb{P}; \boldsymbol{m})\approx 9.6577$ due to \eqref{approx} and thus
$$\phi(\lvert\Delta\rvert) \approx \Phi\left(z_{0.025} + \frac{\lvert\Delta\rvert\sqrt{24.8507}}{0.49}\right),$$
which is depicted by the dashed line in Figure~\ref{Fig1a}. Therefore, under the ICC-ignorable parallel CRT design with random allocation rule, if the HTE absolute value of interest $\lvert\Delta\rvert < 0.177$ then the sample size 1,080 is insufficient for achieving 80\% power; if the HTE absolute value $\lvert\Delta\rvert \in [0.177, 0.275)$, then 80\% power is achievable by the sample size 1,080 under weak imbalance of cluster sizes; if the HTE absolute value $\lvert\Delta\rvert > 0.275$ then the total sample size 1,080 is guaranteed to achieve 80\% power regardless of the cluster size imbalance.

In Figure~\ref{Fig1b}, the solid line depicts the required average cluster size $\bar{m}$ for the case of equal cluster sizes as a function of $\lvert\Delta\rvert$, which is actually
$$\bar{m}(\lvert\Delta\rvert) = \left\lceil\frac{9\left(z_{0.975} + z_{0.8}\right)^2(0.49)^2}{20\Delta^2}\cdot \frac{1}{3}\right\rceil\cdot 3,$$
where $\lceil\bullet\rceil$ is the ceiling function while the dashed line depicts the required average cluster size  $\bar{m}$ for the extreme case as a function of $\lvert\Delta\rvert$. To obtain the dashed line in Figure~\ref{Fig1b}, one has to solve \eqref{equalizer} for $m_{40}$ by fixing $m_1 = \cdots = m_{39} = 3$ because $m_{40}$ appears in both sides of \eqref{equalizer}. 

Figure~\ref{Fig1b} again indicates that under the ICC-ignorable parallel CRT design with random allocation rule, the total sample size 1,080 is insufficient to achieve 80\% power if the HTE absolute value of interest $\lvert\Delta\rvert <0.18$; if the HTE absolute value $\lvert\Delta\rvert \in [0.18, 0.28)$, the 80\% power might be achievable depending on the cluster size imbalance; if the HTE absolute value $\lvert\Delta\rvert \ge 0.28$, then 80\% power is guaranteed for the total sample size 1,080 regardless of the variation among cluster sizes.

In summary, the RECODE trial \citep{kruis2013recode} involving approximately 1,080 COPD patients as an ICC-ignorable CRT allows for the detection of HTE of IDM between MRC groups with 80\% statistical power, provided the HTE is at least 0.18. Therefore, if the MRC subgroup analysis of the RECODE trial yield a significant result smaller than 0.18, it may be attributed to insufficient power and would necessitate an increase in sample size. For HTE between 0.18 and 0.28, while a lack of power remains a possibility, redistributing cluster sizes might be more effective than increasing sample size. However, for HTE greater than 0.28, the trial is guaranteed to achieve 80\% statistical power regardless of the distribution of cluster sizes.

\subsection{PARTNER study: Improved Trial Design Using Random Allocation Rule}\label{subsec: application2}

The PARTNER study \citep{PARTNER} employed a two-arm cluster-randomized trial to evaluate the impact of a telephone-based asthma coaching program for parents on three key outcomes: symptom-free days for children, parental quality-of-life, and emergency department visits. The intervention aimed to enhance primary care management for children with persistent asthma. A total of 948 eligible families were recruited from $I = 22$ community-based primary care practices that provided asthma care to at least 40 children. Stratum-specific randomization was used at the practice level to ensure an equal allocation ($I_0 = I_1 = 11$). For the purpose illustration, we assume the resulting intervention and control arms were from the random allocation rule. 

Apart from the main analysis, HTE between insurance types (Medicaid vs. other) were examined. The primary outcome (change in PACQLQ score) was assumed to have standard deviation 0.91, which allowed us to evaluate the power as a function of the HTE under the ICC-ignorable parallel CRT design with random allocation rule. As there were 223 out of 948 families having Medicaid, we set $\theta = 1 / 4$ in our power and sample size calculation. Hence under the equal cluster size $\bar{m} = (420 + 463) / 22 \approx 40$ at the end of the 12 months for the 22 clusters, the power function under our ICC-ignorable design is given by
$$\phi(\lvert\Delta\rvert) = \Phi\left(z_{0.025} + \frac{\lvert\Delta\rvert\sqrt{165}}{2(0.91)}\right),$$
which is depicted by the solid lines in Figure~\ref{Fig2a}. 

\begin{figure*}[!h]%
    \centering
    \begin{subfigure}{0.5\textwidth}
        \centering
        \includegraphics[width=.99\textwidth,height=.77\textwidth]{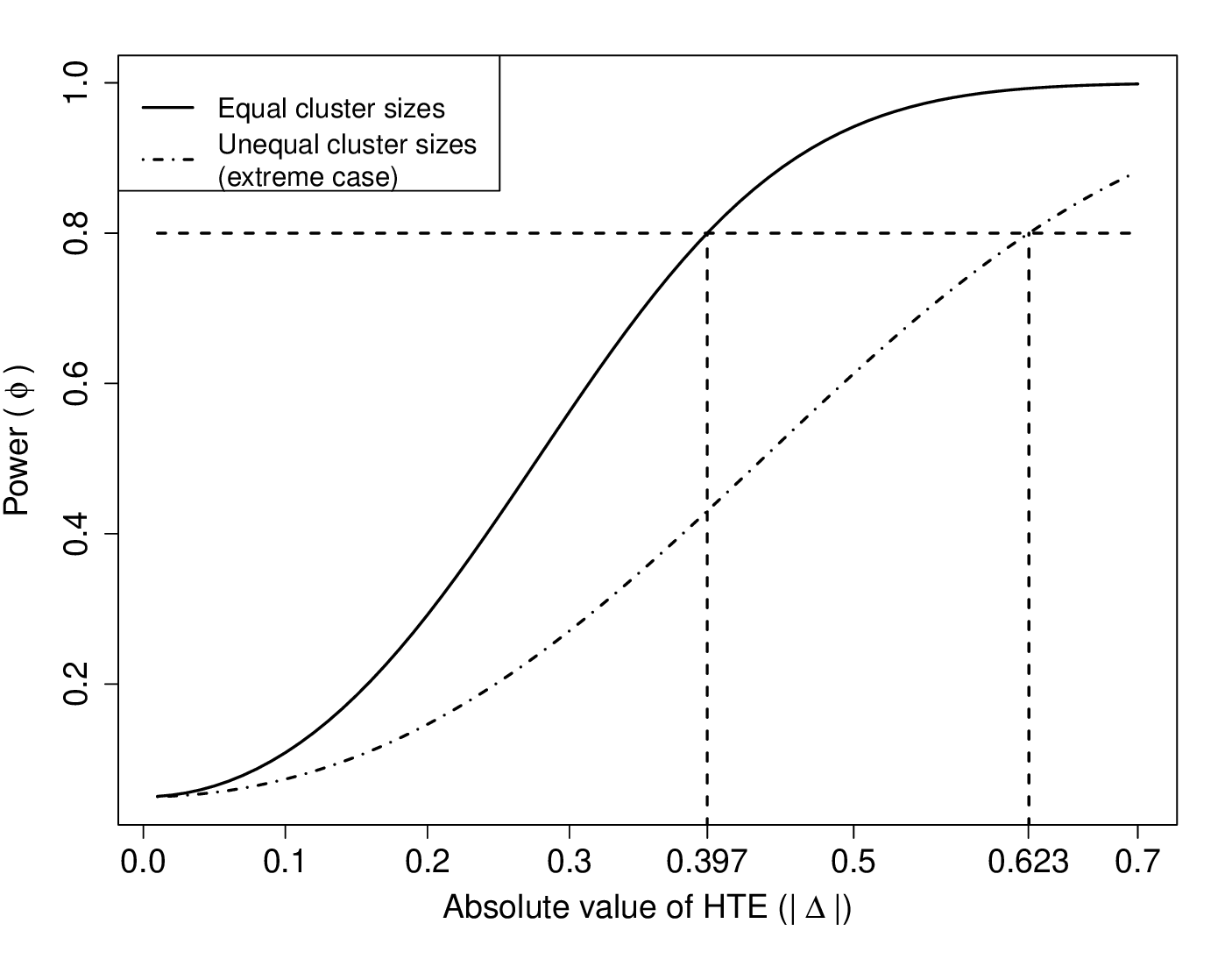}
        \caption{The power as a function of the absolute value of the HTE, given the average cluster size $\bar{m}=40$.}\label{Fig2a}
    \end{subfigure}%
%    ~ 
    \begin{subfigure}{0.5\textwidth}
        \centering
        \includegraphics[width=.99\textwidth,height=.77\textwidth]{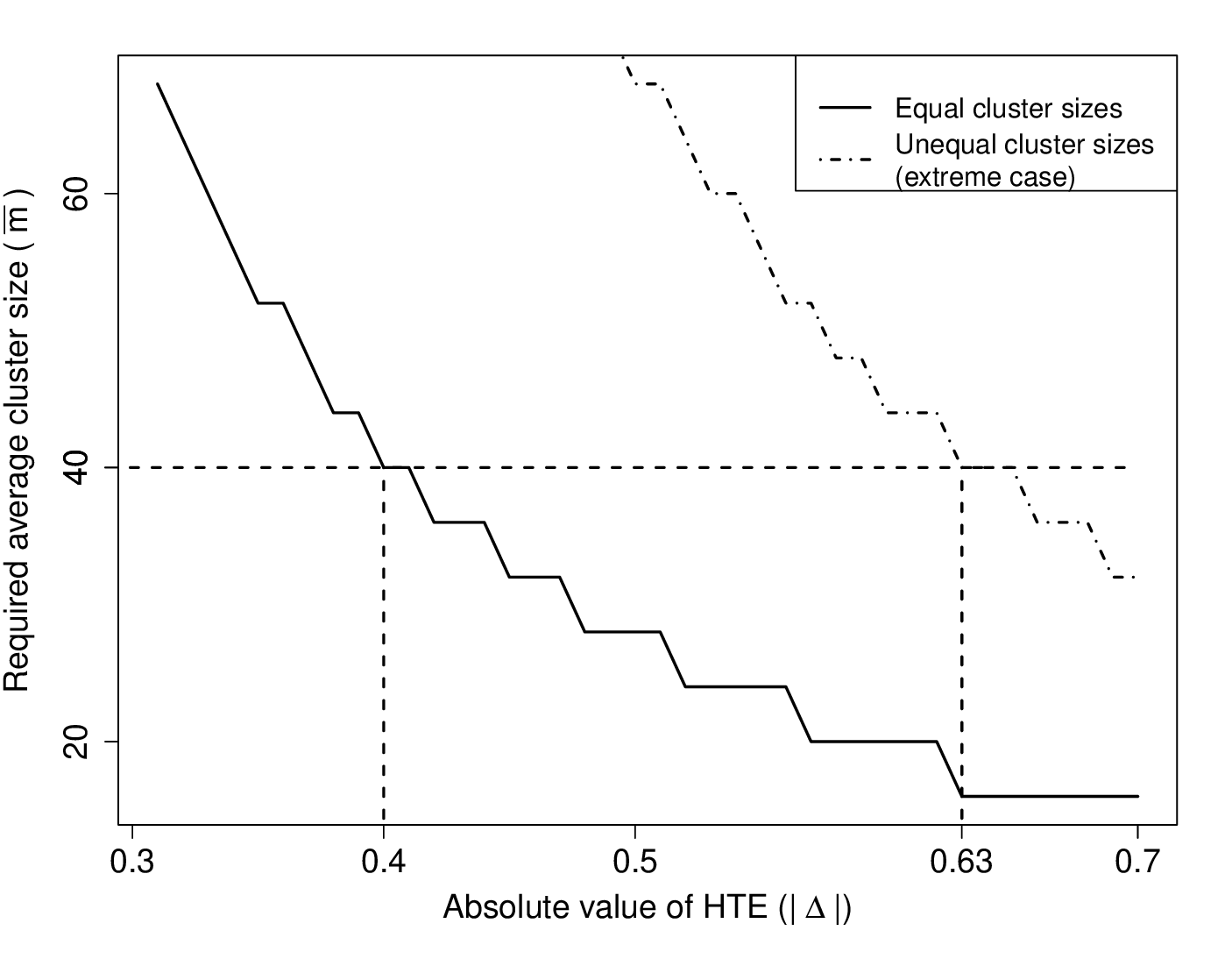}
        \caption{The required average cluster size as a function of the absolute value of the HTE given $80\%$ power.}\label{Fig2b}
    \end{subfigure}
    \caption{The power and required average cluster size of ICC-ignorable parallel CRT design with number of clusters.}
\end{figure*}

We also consider an extreme case with $m_1 = \cdots = m_{21} = 4$ patients while $m_{22} = 796$ patients. The extreme case leads to $\psi(\mathbb{P}; \boldsymbol{m})\approx 9.8644$ due to \eqref{approx} and thus
$$\phi(\lvert\Delta\rvert) \approx \Phi\left(z_{0.025} + \frac{\lvert\Delta\rvert\sqrt{16.7268}}{0.91}\right),$$
which is depicted by the dashed line in Figure~\ref{Fig2a}. Therefore, under the ICC-ignorable parallel CRT design with random allocation rule, if the HTE absolute value of interest $\lvert\Delta\rvert < 0.397$ then the sample size 880 is insufficient for achieving 80\% power; if the HTE absolute value $\lvert\Delta\rvert\in[0.397, 0.623)$, then 80\% power is achievable by the sample size 880 under weak imbalance of cluster sizes; if the HTE absolute value $\lvert\Delta\rvert > 0.623$ then the total sample size 880 is guaranteed to achieve 80\% power regardless of the cluster size imbalance.

In Figure~\ref{Fig2b}, the solid line depicts the required average cluster size $\bar{m}$ for the case of equal cluster sizes as a function of $\lvert\Delta\rvert$, which is
$$\bar{m}(\lvert\Delta\rvert) = \left\lceil\frac{64\left(z_{0.975} + z_{0.8}\right)^2(0.91)^2}{3(22)\Delta^2}\cdot\frac{1}{4}\right\rceil\cdot 4.$$
To obtain the dashed line in Figure~\ref{Fig2b}, one has to solve \eqref{equalizer} for $m_22$ by fixing $m_1 = \cdots = m_{21} = 4$ because $m_{22}$ appears in both sides of \eqref{equalizer}.

Figure~\ref{Fig2b} again indicates that under the ICC-ignorable parallel CRT design with random allocation rule, the total sample size 880 is insufficient to achieve 80\% power if the HTE absolute value of interest $\lvert\Delta\rvert < 0.40$; if the HTE absolute value of interest $\lvert\Delta\rvert\in[0.40, 0.63)$, the 80\% power might be achievable depending on the cluster size imbalance; if the HTE absolute value $\lvert\Delta\rvert \ge 0.63$, then 80\% power is guaranteed for the total sample size 880 regardless of the variation among cluster sizes.

The subgroup analysis at 24 months suggest an HTE near 0.5. With even cluster size,  the sample size (880) may suffice for 80\%. This likely holds true if practices with fewer than 40 asthma patients were included. Our ICC-ignorable design only requires a 1:3 Medicaid: other insurance ratio among practices. Including these practices could reveal a sufficiently powered, significant HTE between Medicaid and other insured families.

\subsection{EPIC study: Operating Characteristics under drop-out}\label{subsec: application3}

EPIC study \citep{EPIC}, utilized a two-arm cluster-randomized trial design to assess the impact of two interventions on the health and wellness of older adults residing in low-income independent housing in Los Angeles. The interventions were 1) health and wellness classes focused on improving physical and mental well-being and 2) in-home preventive healthcare program. A total of 480 participants from $I = 16$ low-income independent older adult apartment buildings were randomly assigned to one of the two intervention arms in a 1:1 ratio ($I_0 = I_1 = 8$). In addition, the proportion of those receiving the health and wellness classes was set to 
$$\overline{W}_{\boldsymbol{m}} = \frac{240}{480} = \frac{1}{2}$$
regardless of the randomization, which leads to $\psi(\mathbb{P}; \boldsymbol{m}) = 4$. Hence under the assumption of 25\% drop-out, if 480 is considered as the worst-case scenario sample size, then the planned sample size should be 640 \citep{EPIC}. The primary outcomes (change in total score, physical functioning subscale, and mental functioning subscales of PROMIS 10-item Global Health Scale) were assumed to have standard error $\sigma_{\epsilon} = 10$. One subgroup analysis of interest is to examine the potential HTE between the subgroup with depression (GDS score 0-4) and the subgroup without depression (GDS score 5-15)\citep{greenberg2012geriatric}. Hence if the ICC-ignorable design is considered, the prevalence of depression $\theta$ can be roughly set as 0.25 for low-income older adults \citep{choi2009unmet}, leading to the required average sample size as a function of $\lvert\Delta\rvert$ given by solving the following equation inspired by \eqref{average size: drop-out}:
$$\bar{m}(\lvert\Delta\rvert) = \frac{\left(z_{0.975} + z_{0.8}\right)^2(10)^2}{3\Delta^2}\left\{\frac{16}{3} + \frac{15.25(28)}{27\bar{m}(\lvert\Delta\rvert)}\right\}$$
(Figure~\ref{Fig3b}), and the power as a function of $\lvert\Delta\rvert$ given by
$$\phi(\lvert\Delta\rvert) = \Phi\left(z_{0.025} + \frac{\lvert\Delta\rvert}{10}\sqrt{120\left(\frac{16}{3} + \frac{15.25(28)}{1080}\right)^{-1}}\right)$$
(Figure~\ref{Fig3a}). Hence the ICC-ignorable CRT design with 16 clusters and average cluster size 30 (assuming 25\% dropouts from average cluster size 40) can detect $\lvert\Delta\rvert \ge 6.13$ with at least 80\% power. Now consider the ICC-ignorable CRT design with the same $\theta = 0.25$ and average cluster size 30 without any dropouts (called the ``No-dropout case''). The power as a function of $\lvert\Delta\rvert$ is then given by
$$\phi(\lvert\Delta\rvert) = \Phi\left(z_{0.025} + \frac{\lvert\Delta\rvert\sqrt{22.5}}{10}\right)$$
and the required average cluster size for achieving 80\% is
$$\bar{m}(\lvert\Delta\rvert) = \frac{4\left(z_{0.975} + z_{0.8}\right)^2(10)^2}{3\Delta^2}.$$
Figure~\ref{Fig3a} and Figure~\ref{Fig3b} also imply that for the No-dropout case the same average cluster size 30 can detect $\lvert\Delta\rvert\ge 5.91$ with 80\% power. This is slightly lower than that of the case described in EPIC study due to that the dropouts introduce extra variation to the cluster sizes.

\begin{figure*}[!h]%
    \centering
    \begin{subfigure}{0.5\textwidth}
        \centering
        \includegraphics[width=.99\textwidth,height=.77\textwidth]{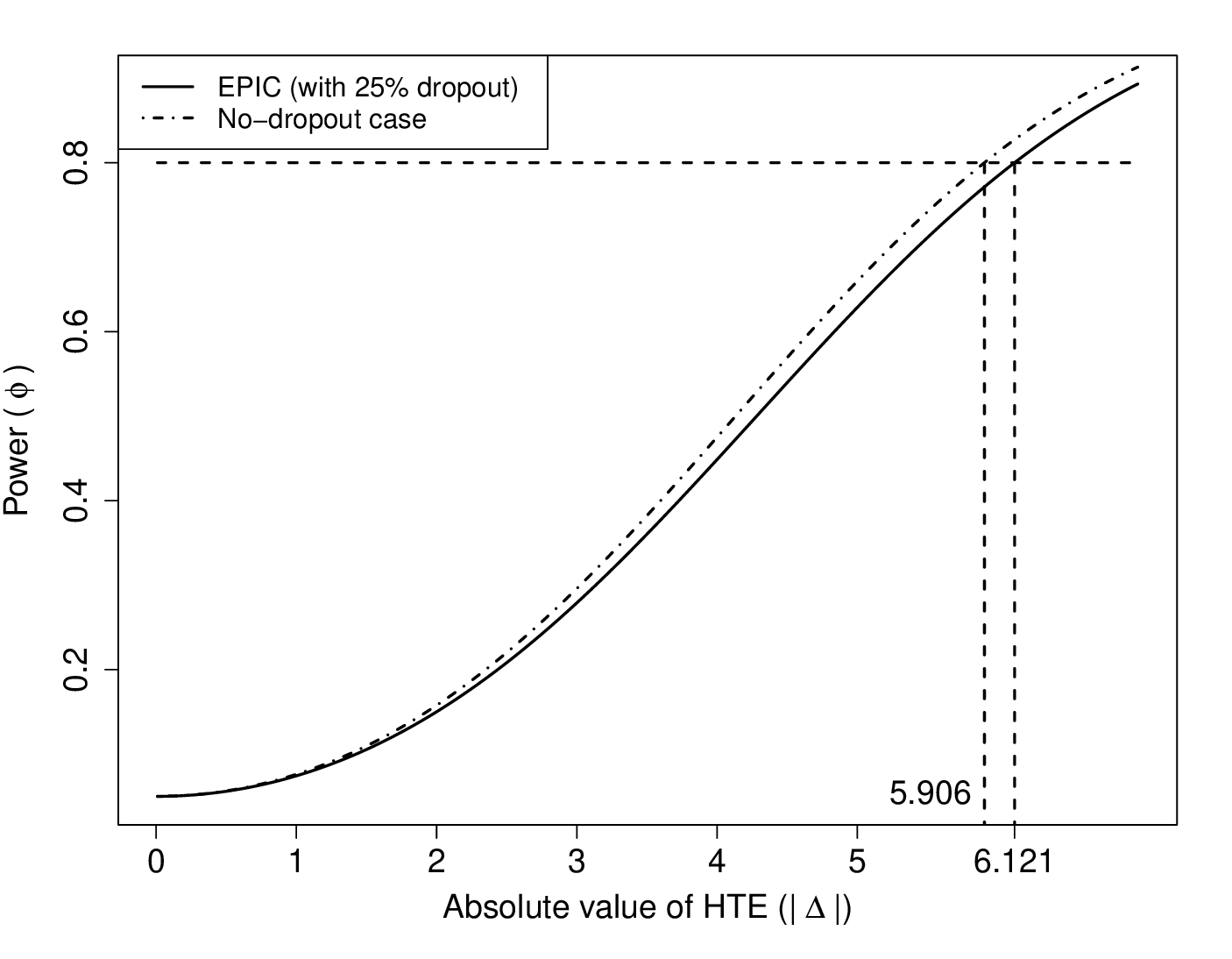}
        \caption{The power as a function of the absolute value of the HTE, given the average cluster size $\bar{m}=30$.}\label{Fig3a}
    \end{subfigure}%
%    ~ 
    \begin{subfigure}{0.5\textwidth}
        \centering
        \includegraphics[width=.99\textwidth,height=.77\textwidth]{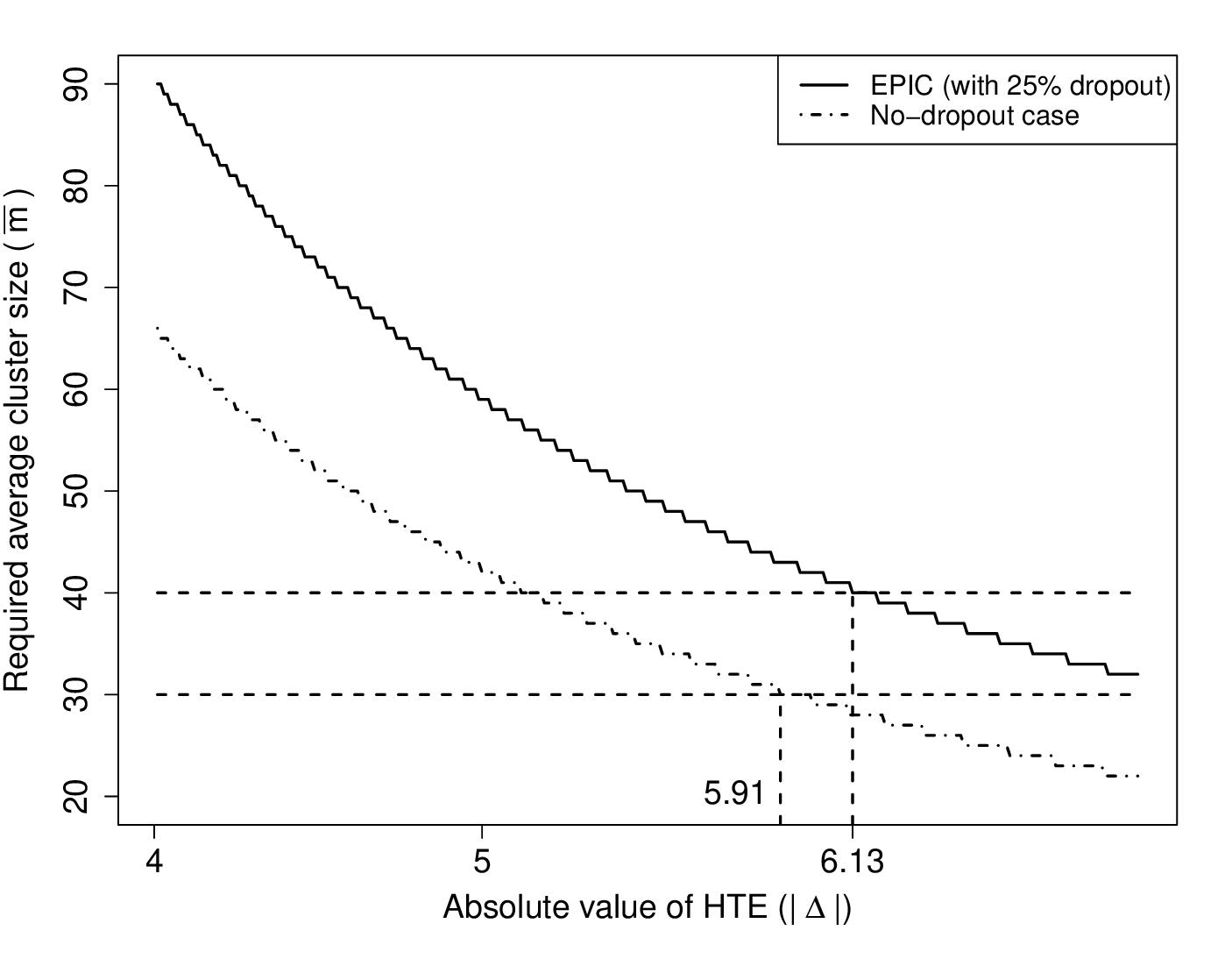}
        \caption{The required average cluster size as a function of the absolute value of the HTE given $80\%$ power.}\label{Fig3b}
    \end{subfigure}
    \caption{The power and required average cluster size of ICC-ignorable parallel CRT design for the depression subgroup analysis of the EPIC study.}
\end{figure*}

If there is no preliminary information about the prevalence of depression, then we shall keep it as $\theta\in(0, 1)$ and check the power as a function of $\theta$ given $\lvert\Delta\rvert$:
$$\phi(\theta; \lvert\Delta\rvert) = \Phi\left(z_{0.025} + \frac{\lvert\Delta\rvert}{10}\sqrt{120}\left\{\frac{1}{\theta(1 - \theta)} + \frac{15.25[\theta^3 + (1 - \theta)^3]}{480(1 - \theta)^2\theta^2}\right\}^{-1}\right),$$
and the required average sample size as a function of $\lvert\Delta\rvert$ given by solving the following equation:
$$\bar{m}(\theta; \lvert\Delta\rvert) = \frac{\left(z_{0.975} + z_{0.8}\right)^2(10)^2}{3\Delta^2}\left\{\frac{1}{\theta(1 - \theta)} + \frac{15.25[\theta^3 + (1 - \theta)^3]}{12\bar{m}(\theta; \lvert\Delta\rvert)(1 - \theta)^2\theta^2}\right\}.$$
Hence the power of ICC-ignorable CRT design with 16 clusters and average cluster size 30 (assuming 25\% dropouts from average cluster size 40) for $\lvert\Delta\rvert = 2.5$, $5$, $7.5$, and $10$ can be found in Figure~\ref{Fig4a}. Besides, Figure~\ref{Fig4b} depicts the required average cluster size to achieving 80\% power for $\lvert\Delta\rvert = 2.5$, $5$, $7.5$, and $10$.

\begin{figure*}[!h]%
    \centering
    \begin{subfigure}{0.5\textwidth}
        \centering
        \includegraphics[width=.99\textwidth,height=.77\textwidth]{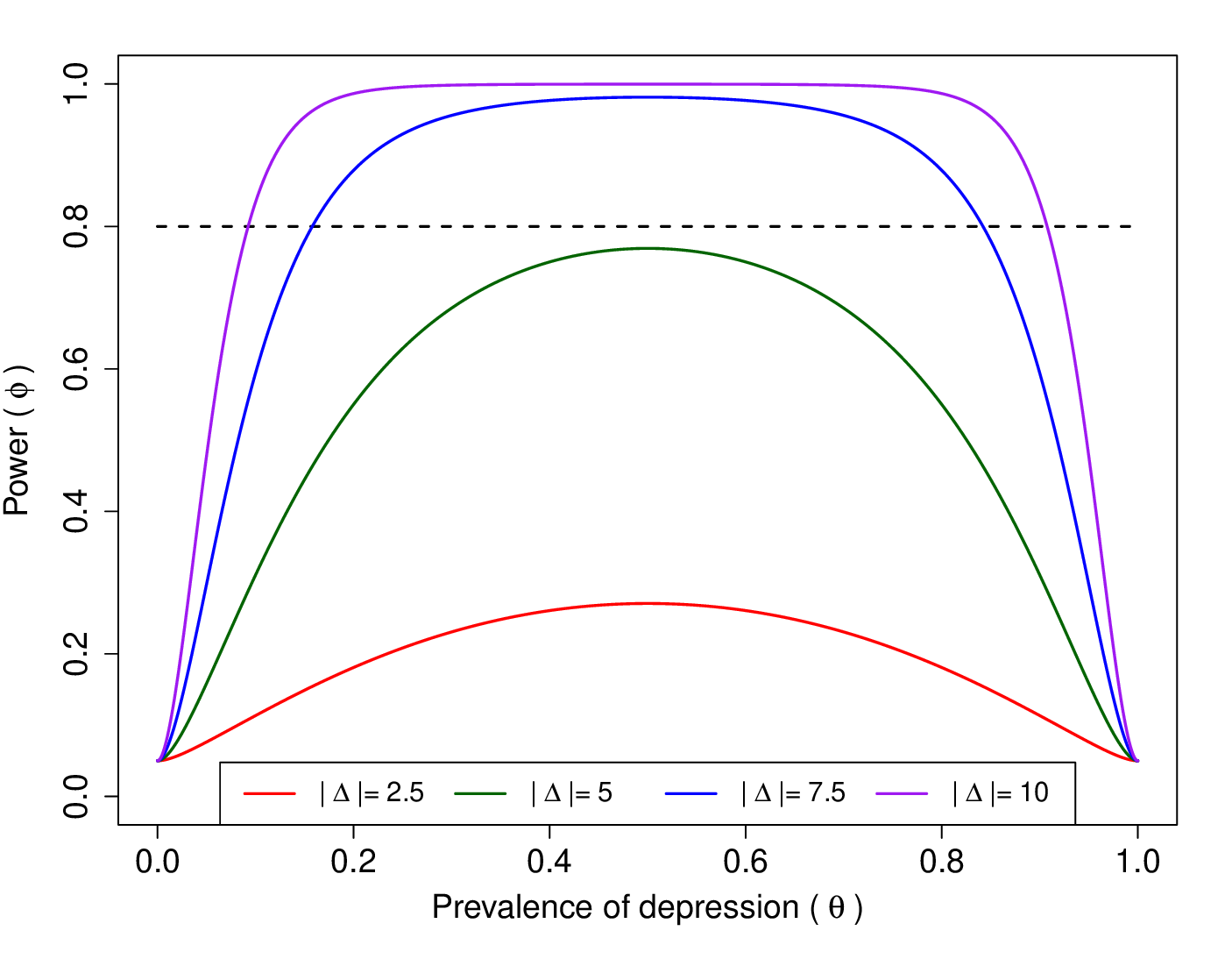}
        \caption{The power as a function of the absolute value of the HTE, given the average cluster size $\bar{m}=30$.}\label{Fig4a}
    \end{subfigure}%
%    ~ 
    \begin{subfigure}{0.5\textwidth}
        \centering
        \includegraphics[width=.99\textwidth,height=.77\textwidth]{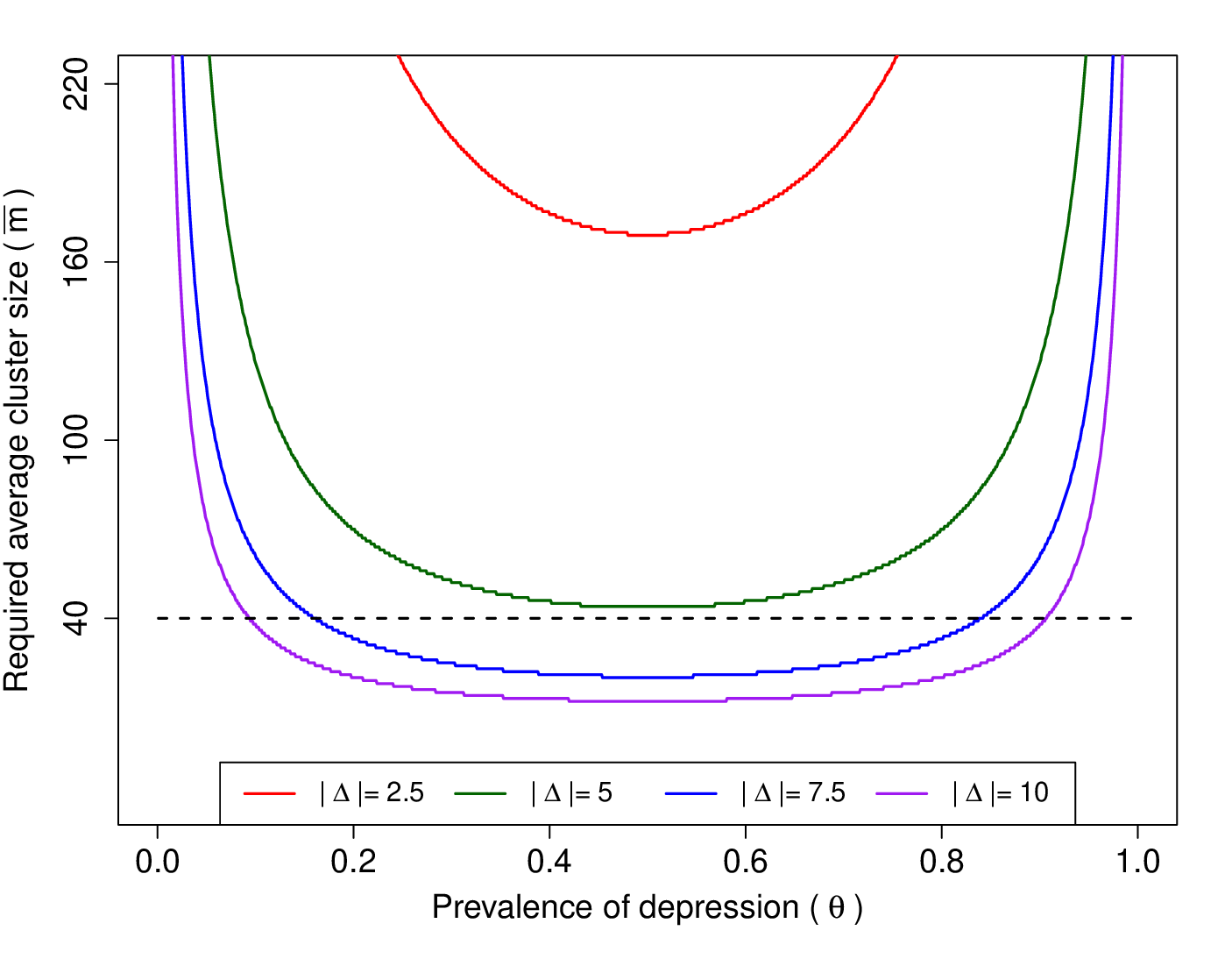}
        \caption{The required average cluster size as a function of the absolute value of the HTE given $80\%$ power.}\label{Fig4b}
    \end{subfigure}
    \caption{The power and required average cluster size of ICC-ignorable parallel CRT design for the depression subgroup analysis of the EPIC study.}
\end{figure*}

By taking the EPIC study as an example, we illustrate the power and sample size analysis based on the ICC-ignorable design when dropout rate is provided by the trial. According to our analysis, the EPIC study may detect HTE no less than 5.196 between depression groups with at least 80\% power using the ICC-ignorable design. If a significant result of HTE is smaller than 5.196 then the EPIC study may need to increase sample size. Moreover, the EPIC study can detect a HTE greater than 7.5 when the prevalence of depression varies between 0.158 and 0.842; greater than 10 when the prevalence of depression varies between 0.092 and 0.908.

\section{Discussion}\label{sec6: discussion}

In this paper, we identify a class of parallel CRTs for detecting the HTE, for which the ICC can be completely ignored from the power and sample size calculation. This class of parallel CRTs is characterized by fixed ratio between the size of the target group and reference group within each cluster. This  indicates that the power for detecting the HTE can be improved to any desirable level by increasing either the number of clusters or the size of each cluster. This flexibility allows for greater adaptability in trial design. For instance, the requirement in existing trials that primary care practices must provide asthma care to at least 40 or more children could be removed when using this approach to evaluate the effectiveness of telephone peer coaching for parents in reducing childhood asthma morbidity \citep{kruis2014effectiveness}. Section~\ref{subsec: application2} of the study demonstrates that primary care practices providing asthma care to children from four families can be included in the CRT as long as these families adhere to a 1:3 Medicaid-to-other insurance ratio. Relaxing these restrictions allow for the inclusion of more clusters in the CRT, which can enhance statistical power if necessary. Additionally, the ICC-ignorable property proves advantageous for trials when increasing the number of clusters is challenging or expensive.

The ENSPIRE study, which analyzed HTE across numerous covariates, aimed for 90\% power with a rigorous test of the ICC assumption (ranging from 0.01 to 0.05)\citep{hsu2024engaging}. To achieve this, they randomized 40 long-term care centers (LTCC) with thousands of frontline healthcare workers as clusters. If some HTE analyses with significant results lacked sufficient statistical power, an ICC-ignorable CRT design could be employed. This approach would enhance power by increasing the individuals within existing LTCC, thereby avoiding the need to recruit additional LTCC.

Finally, the ICC-ignorable CRT design is particularly suitable for subgroup analysis of CRTs when the ICC is unknown. This is exemplified by the ACTIVITAL study, which analyzed student subgroups based on weight categories (underweight, normal BMI, overweight/obese) without explicit ICC information \citep{andrade2016two}. In such cases, our methodology allows for direct power and sample size calculations without the need for ICC estimation, simplifying the planning and analysis process for these specific subgroup investigations.  be obtained directly from our result without considering the.

In addition to identifying the class of ICC-ignorable CRTs, this paper also provides theoretical results as well as a numerical approximation of the standard error of the GLS estimator of the HTE particularly under the random allocation rule. As 1:1 ratio between intervention and control arms is common for CRTs to maximize the power, we focus on the simplest case of randomization that guarantees the 1:1 ratio. The result reveals that under the random allocation rule the power of ICC-ignorable CRTs is substantially affected by the empirical CV of cluster sizes, which coincides with the result of Tong et al for general CRTs with unequal cluster sizes \citep{tong2022accounting}. In fact, the CV or imbalance of cluster sizes may lead to different results in power and sample size determination as illustrated in Section~\ref{subsec: application1} and Section~\ref{subsec: application2}. This implication suggests that when ICC-ignorable CRT is used, the power may be improved by reducing the cluster size imbalance if feasible before increasing the number of clusters or cluster sizes.

The impact of unequal cluster sizes in CRTs has been extensively studied \citep{tong2022accounting,pan2001sample,kerry2001unequal,van2007relative,eldridge2006sample}. For many studies, the coefficient of variation plays a pivotal role in determining the design effect \citep{ tong2022accounting,van2007relative}. One major benefit for this framework is that the information of the randomization of clusters with unequal cluster sizes is implicitly contained in the prior information F from which the cluster sizes are assumed to be drawn at random. However, the obvious drawback is that the cluster sizes might be very difficult to simulate given only the CV (see \citet{tong2022accounting} for example). For this reason, we propose an analytic framework to separate the randomization scheme from the cluster sizes. With the help of this framework, we obtained an approximation which contains only empirical CV and kurtosis of cluster sizes for the random allocation rule. For the simulation studies, the randomization can be implemented by directly using simple random sample. Under our proposed framework, it is possible to explore the power and sample size calculation under other randomization schemes, for example, the permuted block randomization which was actually used by \citet{kruis2014effectiveness}. Furthermore, if there is uncertainty about the prevalence of subgroups, which is expected, statisticians can still evaluate which plausible HTE values the study will have sufficient power to detect by plotting the power function as a function of prevalence for a given HTE value.

Potential limitations of this study include the restriction of continuous outcomes with data-generating process based on the LMM, and the strict condition to ensure the same proportion of participants from different subgroups for all clusters. In some situations, continuous subgroup identifiers might be more interesting \citep{yang2020sample,tong2022accounting}. Hence the ICC-ignorable property may not be applicable for detecting the HTE in such cases. Moreover, for binary and count outcomes, the ICC-ignorable property may not hold either due to the generalized linear mixed-effect data-generating process. Furthermore, the condition to ensure the same proportion of participants from different subgroups for all clusters may not be satisfied due to loss to follow-up. Nevertheless, intuitively the homogeneity of subgroup proportions among clusters can help reduce the impact of ICC on the power and sample size calculation. This should be easily verified by simulation studies.

In summary, this paper establishes a class of parallel cluster randomized trials (CRTs) where the intraclass correlation coefficient (ICC) becomes irrelevant for power and sample size calculations when the ratio between intervention and control groups within each cluster is fixed. This finding offers valuable flexibility in trial design, allowing for the inclusion of smaller clusters or a greater number of clusters, potentially expanding the pool of eligible participants and enhancing the feasibility of conducting CRTs.

Beyond identifying this class of ICC-ignorable CRTs, this study provides a theoretical framework for power and sample size calculations, particularly under the common scenario of 1:1 randomization. This framework highlights the critical influence of cluster size variability on trial power, emphasizing the importance of minimizing cluster size imbalance to optimize study efficiency.

While this work focuses on continuous outcomes and parallel CRTs, the concept of ICC-ignorability has broader implications. Future research will investigate its potential applicability in stepped-wedge CRTs, where both ICC and cluster autocorrelation play significant roles. Furthermore, exploring the robustness of ICC-ignorability to violations of the key condition of equal proportion of participants from target subgroup among all clusters and its applicability to binary and count outcomes will be crucial for expanding the practical utility of these findings.

%\backmatter
%\bmsection*{Author contributions}
%
%This is an author contribution text. This is an author contribution text. This is an author contribution text. This is an author contribution text. This is an author contribution text.
%
%\bmsection*{Acknowledgments}
%This is acknowledgment text. \cite{Kenamond2013} Provide text here. This is acknowledgment text. Provide text here. This is acknowledgment text. Provide text here. This is acknowledgment text. Provide text here. This is acknowledgment text. Provide text here. This is acknowledgment text. Provide text here. This is acknowledgment text. Provide text here. This is acknowledgment text. Provide text here. This is acknowledgment text. Provide text here.

\section*{Financial disclosure}

Dr. Mazumdar receives grant funding paid to her institution for grants related to this work from NCI and NCATS (R56CA267957, P30CA196521, U01TR002997-01A1, P30AG028741, UL1 TR004419) and unrelated to this work from NCI and NCATS (R35CA220491, U24CA224319, P01AG066605, U01CA121947). Dr. Kwon was partially supported by the support provided by the Biostatistics/Epidemiology/ Research Design (BERD) component of the Center for Clinical and Translational Sciences (CCTS) for this project that is currently funded through a grant (UL1TR003167), funded by the National Center for Advancing Translational Sciences (NCATS), awarded to the University of Texas Health Science Center at Houston.

\section*{Conflict of interest}

The authors declare no potential conflict of interests.

\bibliographystyle{plainnat}
\bibliography{ICC-Ignorability}

\pagebreak

\listoftables

\pagebreak

\begin{center}
\begin{table*}[!h]%
\caption{Simulation results for examining the performance of computed standard error (CSE) of the GLS estimator $\hat{\beta}_4$ given by \eqref{CSE} under various combinations of $\bar{m}$ and $q$. The ESD and $\overline{\text{SE}}$ are computed by fitting the LMM from 10,000 Monte Carlo simulations under various choices of the ICC $\rho$, compared with the CSE given by \eqref{CSE} with $\psi(\mathbb{P}; \boldsymbol{m})$ approximated by \eqref{approx}.\label{tab1: standard deviation}}
\begin{tabular*}{\textwidth}{@{\extracolsep\fill}lllllllllll@{}}
\toprule
& &\multicolumn{3}{@{}c}{$q = 1$} & \multicolumn{3}{@{}c}{$q = 2$} & \multicolumn{3}{@{}c}{$q= 3$} \\\cmidrule{3-5}\cmidrule{6-8}\cmidrule{9-11}
$\bar{m}$ & $\rho$ & ESD & $\overline{\text{SE}}$ & CSE & ESD & $\overline{\text{SE}}$ & CSE & ESD & $\overline{\text{SE}}$ & CSE \\
\midrule
\multirow{3}{*}{20} & 0.05 & 0.3315 & 0.3300 & \multirow{3}{*}{0.3309} & 0.2310 & 0.2278 & \multirow{3}{*}{0.2282} & 0.1829 & 0.1849 & \multirow{3}{*}{0.1849} \\
 & 0.50 & 0.3315 & 0.3305 & & 0.2310 & 0.2279 & & 0.1829 & 0.1849 & \\
 & 0.95 & 0.3315 & 0.3305 & & 0.2310 & 0.2279 & & 0.1829 & 0.1849 & \\
\midrule
\multirow{3}{*}{40} & 0.05 & 0.2354 & 0.2341 & \multirow{3}{*}{0.2340} & 0.1609 & 0.1612 & \multirow{3}{*}{0.1613} & 0.1311 & 0.1308 & \multirow{3}{*}{0.1308} \\
 & 0.50 & 0.2354 & 0.2341 & & 0.1609 & 0.1612 & & 0.1311 & 0.1308 & \\
 & 0.95 & 0.2354 & 0.2341 & & 0.1609 & 0.1612 & & 0.1311 & 0.1308 & \\
\midrule
\multirow{3}{*}{60} & 0.05 & 0.1939 & 0.1911 & \multirow{3}{*}{0.1911} & 0.1312 & 0.1317 & \multirow{3}{*}{0.1317} & 0.1076 & 0.1067 & \multirow{3}{*}{0.1068} \\
 & 0.50 & 0.1939 & 0.1912 & & 0.1312 & 0.1317 & & 0.1076 & 0.1067 & \\
 & 0.95 & 0.1939 & 0.1912 & & 0.1312 & 0.1317 & & 0.1076 & 0.1067 & \\
\bottomrule
\end{tabular*}
\end{table*}
\end{center}

\pagebreak

\begin{center}
\begin{table*}[!h]%
\caption{Simulation results for examining the performance of the predicted power (PREP) given by \eqref{power prerandomization} under various combinations of $\theta$ and $\Delta$. The estimated required average cluser size $\bar{m}$ is based on \eqref{average size: PREP}; the empirical Type-I error $\psi_0$ and the empirical power $\phi_0$ are computed by fitting the LMM from 10,000 Monte Carlo simulations under various choices of the ICC $\rho$; and the predicted power $\phi$ obtained from \eqref{power prerandomization}.\label{tab2: pre randomization}}
\begin{tiny}
\begin{tabular*}{\textwidth}{@{\extracolsep\fill}llllllllllllll@{}}
\toprule
& &\multicolumn{4}{@{}c}{$\Delta = 0.25$} & \multicolumn{4}{@{}c}{$\Delta = 0.35$} & \multicolumn{4}{@{}c}{$\Delta = 0.45$} \\\cmidrule{3-6}\cmidrule{7-10}\cmidrule{11-14}
$\theta$ & $\rho$ & $\bar{m}$ & $\psi_0$ & $\phi_0$ & $\phi$ & $\bar{m}$ & $\psi_0$ & $\phi_0$ & $\phi$ & $\bar{m}$ & $\psi_0$ & $\phi_0$ & $\phi$ \\
\midrule
\multirow{3}{*}{0.3} & 0.05 & \multirow{3}{*}{320} & 0.0472 & 0.7924 & \multirow{3}{*}{0.7910} & \multirow{3}{*}{160} & 0.0493 & 0.7754 & \multirow{3}{*}{0.7829} & \multirow{3}{*}{100} & 0.0487 & 0.7930 & \multirow{3}{*}{0.7959} \\
 & 0.50 & & 0.0472 & 0.7924 & & & 0.0493 & 0.7754 & & & 0.0487 & 0.7931 & \\
 & 0.95 & & 0.0472 & 0.7924 & & & 0.0493 & 0.7754 & & & 0.0487 & 0.7931 & \\
\midrule
\multirow{3}{*}{0.4} & 0.05 & \multirow{3}{*}{290} & 0.0483 & 0.8088 & \multirow{3}{*}{0.8048} & \multirow{3}{*}{150} & 0.0496 & 0.8048 & \multirow{3}{*}{0.8101} & \multirow{3}{*}{90} & 0.0482 & 0.8002 & \multirow{3}{*}{0.8069} \\
 & 0.50 & & 0.0483 & 0.8088 & & & 0.0496 & 0.8048 & & & 0.0480 & 0.8003 & \\
 & 0.95 & & 0.0483 & 0.8088 & & & 0.0496 & 0.8048 & & & 0.0480 & 0.8003 & \\
\midrule
\multirow{3}{*}{0.5} & 0.05 & \multirow{3}{*}{276} & 0.0541 & 0.8008 & \multirow{3}{*}{0.8014} & \multirow{3}{*}{140} & 0.0502 & 0.8045 & \multirow{3}{*}{0.7991} & \multirow{3}{*}{84} & 0.0495 & 0.7869 & \multirow{3}{*}{0.7959} \\
 & 0.50 & & 0.0541 & 0.8007 & & & 0.0502 & 0.8043 & & & 0.0494 & 0.7867 & \\
 & 0.95 & & 0.0541 & 0.8007 & & & 0.0502 & 0.8043 & & & 0.0494 & 0.7867 & \\
\bottomrule
\end{tabular*}
\end{tiny}
\end{table*}
\end{center}

\pagebreak

\begin{center}
\begin{table*}[!h]%
\caption{Simulation results for examining the performance of the predicted power (PREP) given by \eqref{power q} under various combinations of $q$ and $\Delta$. The estimated required average cluser size $\bar{m}$ is based on \eqref{average size: q}; the empirical Type-I error $\psi_0$ and the empirical power $\phi_0$ are computed by fitting the LMM from 10,000 Monte Carlo simulations under various choices of the ICC $\rho$; and the predicted power $\phi$ is obtained from \eqref{power q}.\label{tab3: pre randomization mn}}
\begin{tiny}
\begin{tabular*}{\textwidth}{@{\extracolsep\fill}llllllllllllll@{}}
\toprule
& &\multicolumn{4}{@{}c}{$\Delta = 0.25$} & \multicolumn{4}{@{}c}{$\Delta = 0.35$} & \multicolumn{4}{@{}c}{$\Delta = 0.45$} \\\cmidrule{3-6}\cmidrule{7-10}\cmidrule{11-14}
$q$ & $\rho$ & $\bar{m}$ & $\psi_0$ & $\phi_0$ & $\phi$ & $\bar{m}$ & $\psi_0$ & $\phi_0$ & $\phi$ & $\bar{m}$ & $\psi_0$ & $\phi_0$ & $\phi$ \\
\midrule
\multirow{3}{*}{2 (i.e. $I = 16$)} & 0.05 & \multirow{3}{*}{132} & 0.0505 & 0.8011 & \multirow{3}{*}{0.8037} & \multirow{3}{*}{68} & 0.0524 & 0.8193 & \multirow{3}{*}{0.8074} & \multirow{3}{*}{40} & 0.0489 & 0.7953 & \multirow{3}{*}{0.7965} \\
 & 0.50 & & 0.0505 & 0.8011 & & & 0.0525 & 0.8193 & & & 0.0488 & 0.7954 & \\
 & 0.95 & & 0.0505 & 0.8011 & & & 0.0525 & 0.8194 & & & 0.0488 & 0.7955 & \\
\midrule
\multirow{3}{*}{3 (i.e. $I = 24$)} & 0.05 & \multirow{3}{*}{86} & 0.0538 & 0.8051 & \multirow{3}{*}{0.8004} & \multirow{3}{*}{44} & 0.0495 & 0.7992 & \multirow{3}{*}{0.8015} & \multirow{3}{*}{28} & 0.0509 & 0.8239 & \multirow{3}{*}{0.8210} \\
 & 0.50 & & 0.0536 & 0.8053 & & & 0.0495 & 0.7992 & & & 0.0508 & 0.8240 & \\
 & 0.95 & & 0.0537 & 0.8052 & & & 0.0495 & 0.7992 & & & 0.0508 & 0.8240 & \\
\midrule
\multirow{3}{*}{4 (i.e. $I = 32$)} & 0.05 & \multirow{3}{*}{64} & 0.0494 & 0.7983 & \multirow{3}{*}{0.8001} & \multirow{3}{*}{32} & 0.0466 & 0.7875 & \multirow{3}{*}{0.7921} & \multirow{3}{*}{20} & 0.0453 & 0.8127 & \multirow{3}{*}{0.8049} \\
 & 0.50 & & 0.0493 & 0.7985 & & & 0.0465 & 0.7875 & & & 0.0453 & 0.8125 & \\
 & 0.95 & & 0.0493 & 0.7984 & & & 0.0465 & 0.7876 & & & 0.0453 & 0.8127 & \\
\bottomrule
\end{tabular*}
\end{tiny}
\end{table*}
\end{center}

\pagebreak

\begin{center}
\begin{table*}[!h]%
\caption{Simulation results for examining the performance of the predicted power (PREP) given by \eqref{power: drop-out} under various combinations of $r$ and $\Delta$. The estimated required average cluser size $\bar{m}$ is based on \eqref{average size: drop-out this case}; the empirical Type-I error $\psi_0$ and the empirical power $\phi_0$ are computed by fitting the LMM from 10,000 Monte Carlo simulations under various choices of the ICC $\rho$; and the predicted power $\phi$ is obtained from \eqref{power: drop-out}.\label{tab4: drop-out}}
\begin{tiny}
\begin{tabular*}{\textwidth}{@{\extracolsep\fill}llllllllllllll@{}}
\toprule
& &\multicolumn{4}{@{}c}{$\Delta = 0.25$} & \multicolumn{4}{@{}c}{$\Delta = 0.35$} & \multicolumn{4}{@{}c}{$\Delta = 0.45$} \\\cmidrule{3-6}\cmidrule{7-10}\cmidrule{11-14}
$r$ & $\rho$ & $\bar{m}$ & $\psi_0$ & $\phi_0$ & $\phi$ & $\bar{m}$ & $\psi_0$ & $\phi_0$ & $\phi$ & $\bar{m}$ & $\psi_0$ & $\phi_0$ & $\phi$ \\
\midrule
\multirow{3}{*}{0.2} & 0.05 & \multirow{3}{*}{340} & 0.0501 & 0.7918 & \multirow{3}{*}{0.7936} & \multirow{3}{*}{180} & 0.0470 & 0.8053 & \multirow{3}{*}{0.8064} & \multirow{3}{*}{110} & 0.0498 & 0.8061 & \multirow{3}{*}{0.8079} \\
 & 0.50 & & 0.0504 & 0.7914 & & & 0.0471 & 0.8048 & & & 0.0496 & 0.8066 & \\
 & 0.95 & & 0.0505 & 0.7915 & & & 0.0474 & 0.8056 & & & 0.0495 & 0.8062 & \\
\midrule
\multirow{3}{*}{0.25} & 0.05 & \multirow{3}{*}{368} & 0.0458 & 0.8043 & \multirow{3}{*}{0.7994} & \multirow{3}{*}{188} & 0.0512 & 0.7992 & \multirow{3}{*}{0.7980} & \multirow{3}{*}{116} & 0.0506 & 0.8051 & \multirow{3}{*}{0.8034} \\
 & 0.50 & & 0.0454 & 0.8043 & & & 0.0510 & 0.7994 & & & 0.0503 & 0.8042 & \\
 & 0.95 & & 0.0454 & 0.8043 & & & 0.0510 & 0.7991 & & & 0.0506 & 0.8046 & \\
\midrule
\multirow{3}{*}{0.3} & 0.05 & \multirow{3}{*}{400} & 0.0490 & 0.8078 & \multirow{3}{*}{0.8051} & \multirow{3}{*}{200} & 0.0482 & 0.8001 & \multirow{3}{*}{0.7952} & \multirow{3}{*}{120} & 0.0492 & 0.7997 & \multirow{3}{*}{0.7893} \\
 & 0.50 & & 0.0485 & 0.8086 & & & 0.0482 & 0.8002 & & & 0.0492 & 0.7992 & \\
 & 0.95 & & 0.0484 & 0.8051 & & & 0.0481 & 0.8002 & & & 0.0493 & 0.7997 & \\
\bottomrule
\end{tabular*}
\end{tiny}
\end{table*}
\end{center}

\pagebreak

\section*{Supporting information}

Additional supporting information may be found in the online version of the article at the publisher’s website.

\appendix

\section{Proofs}\label{app1: proof}
\vspace*{12pt}
\begin{proof}[Proof of Theorem~\ref{thm1}]
For $\boldsymbol{Y}_i$ generated by \eqref{model}, it is straightforward that
$$\mathbb{V}ar(\boldsymbol{Y}_i|W_i, \mathbf{X}_i) = \sigma^2_{\gamma}\mathbf{J}_{m_i} + \sigma^2_{\epsilon}\mathbf{I}_{m_i}$$
where $\mathbf{J}_{m_i} = \mathbf{1}_{m_i}\mathbf{1}^{\top}_{m_i}$ for all $i=1,\ldots,I$. A typical reparameterization of $\mathbb{V}ar(\boldsymbol{Y}_i|W_i, \mathbf{X}_i)$ is
$$\mathbb{V}ar(\boldsymbol{Y}_i|W_i, \mathbf{X}_i) = \sigma^2_{y|x}\mathbf{R}_i(\rho)$$
where $\sigma^2_{y|x} = \sigma^2_{\gamma} + \sigma^2_{\epsilon}$ denotes the conditional variance of the individual level outcome and
\begin{equation}\label{Ri}
\mathbf{R}_i(\rho) = (1 - \rho)\mathbf{I}_{m_i} + \rho\mathbf{J}_{m_i}
\end{equation}
is the conditional correlation matrix function of $\boldsymbol{Y}_i$ with the ICC $\rho = \sigma^2_{\gamma} / \sigma^2_{y|x}$ as the single argument. Since
$$\mathbf{R}_i(\rho)^{-1} = \frac{1}{1 - \rho}\left(\mathbf{I}_{m_i} - \frac{\rho}{1 - \rho + m_i\rho}\mathbf{J}_{m_i}\right)$$
\cite[see Eq.~(A.1), Web Appendix][]{li2018sample} for example, the function
$$d_i(\rho) = \frac{1}{1 - \rho + m_i\rho}$$
is an eigenvalue of $\mathbf{R}_i(\rho)^{-1}$ with the corresponding eigenvector is $\mathbf{1}_{m_i}$ for all $i=1,\ldots,I$. Let
\begin{equation}\label{Key}
\overline{W}_{\boldsymbol{m}}(\rho) = \frac{\displaystyle\sum^n_{i=1}m_id_i(\rho)W_i}{\displaystyle\sum^n_{i=1}m_id_i(\rho)}.
\end{equation}
Hence reparameterization of \eqref{model} yields
\begin{equation}\label{Repar}
\boldsymbol{Y}_i = \nu_1\mathbf{1}_{m_i} + \nu_2(W_i - \overline{W}_{\boldsymbol{m}}(\rho))\mathbf{1}_{m_i} + \mathbf{X}_i\boldsymbol{\nu}_3 + (W_i - \overline{W}_{\boldsymbol{m}}(\rho))\mathbf{X}_i\boldsymbol{\nu}_4 + \gamma_i\mathbf{1}_{m_i} + \boldsymbol{\epsilon}_i,
\end{equation}
where $\nu_1 = \beta_1 + \beta_2\overline{W}_{\boldsymbol{m}}(\rho)$, $\nu_2 = \beta_2$, $\boldsymbol{\nu}_3 = \boldsymbol{\beta}_3 + \boldsymbol{\beta}_4\overline{W}_{\boldsymbol{m}}(\rho)$, and $\boldsymbol{\nu}_4 = \boldsymbol{\beta}_4$. Correspondingly, the design matrix for LMM \eqref{Repar} is given by
\begin{align*}
\widetilde{\mathbf{Z}}_i(\rho) &= \begin{bmatrix}\mathbf{1}_{m_i} & (W_i - \overline{W}_{\boldsymbol{m}}(\rho))\mathbf{1}_{m_i} & \mathbf{X}_i & (W_i - \overline{W}_{\boldsymbol{m}}(\rho))\mathbf{X}_i\end{bmatrix} \\
&= \begin{bmatrix} \begin{bmatrix} 1 & W_i - \overline{W}_{\boldsymbol{m}}(\rho) \end{bmatrix}\otimes \mathbf{1}_{m_i} & \begin{bmatrix} 1 & W_i - \overline{W}_{\boldsymbol{m}}(\rho) \end{bmatrix}\otimes  \mathbf{X}_i\end{bmatrix}
\end{align*}
where $\otimes$ denotes the Kronecker product. Thus
\begin{align*}
\mathbf{1}^{\top}_{m_i}\widetilde{\mathbf{Z}}_i(\rho) &= m_i\begin{bmatrix}1 & W_i - \overline{W}_{\boldsymbol{m}}(\rho) & \boldsymbol{\theta}^{\top} & (W_i - \overline{W}_{\boldsymbol{m}}(\rho))\boldsymbol{\theta}^{\top}\end{bmatrix} \\
&= m_i\begin{bmatrix} \begin{bmatrix} 1 & W_i - \overline{W}_{\boldsymbol{m}}(\rho) \end{bmatrix}\otimes 1 & \begin{bmatrix} 1 & W_i - \overline{W}_{\boldsymbol{m}}(\rho) \end{bmatrix}\otimes  \boldsymbol{\theta}^{\top}\end{bmatrix}.
\end{align*}
Since 
$$\mathbb{V}ar(\hat{\boldsymbol{\beta}}_4|\{W_i, \mathbf{X}_i\}^I_{i=1}) = \mathbb{V}ar(\hat{\boldsymbol{\nu}}_4|\{\widetilde{\mathbf{Z}}_i(\rho)\}^I_{i=1})$$
is the lower-right $p\times p$ block of $\left(\sum^I_{i=1}\widetilde{\mathbf{Z}}_i(\rho)^{\top}\mathbf{R}_i(\rho)^{-1}\widetilde{\mathbf{Z}}_i(\rho)\right)^{-1}$, we denote
$$\sum^I_{i=1}\widetilde{\mathbf{Z}}_i(\rho)^{\top}\mathbf{R}_i(\rho)^{-1}\widetilde{\mathbf{Z}}_i(\rho) = \frac{1}{1 - \rho}\mathbf{S}_{\boldsymbol{m}}(\rho) - \frac{\rho}{1 - \rho}\mathbf{T}_{\boldsymbol{m}}(\rho),
$$
where
\begin{align*}
\mathbf{S}_{\boldsymbol{m}}(\rho) &= \sum^I_{i=1}\widetilde{\mathbf{Z}}_i(\rho)^{\top}\widetilde{\mathbf{Z}}_i(\rho) \\
&= \sum^I_{i=1}\left(\begin{bmatrix}\begin{bmatrix}1 \\ W_i - \overline{W}_{\boldsymbol{m}}(\rho)\end{bmatrix}\otimes \mathbf{1}^{\top}_{m_i} \\ \begin{bmatrix}1 \\ W_i - \overline{W}_{\boldsymbol{m}}(\rho)\end{bmatrix}\otimes \mathbf{X}^{\top}_i\end{bmatrix}\right)\left(\begin{bmatrix} \begin{bmatrix} 1 & W_i - \overline{W}_{\boldsymbol{m}}(\rho) \end{bmatrix}\otimes \mathbf{1}_{m_i} & \begin{bmatrix} 1 & W_i - \overline{W}_{\boldsymbol{m}}(\rho) \end{bmatrix}\otimes  \mathbf{X}_i\end{bmatrix}\right) \\
&=\sum^I_{i=1}m_i\begin{bmatrix}
\begin{bmatrix}1 & W_i - \overline{W}_{\boldsymbol{m}}(\rho) \\ W_i - \overline{W}_{\boldsymbol{m}}(\rho) & (W_i - \overline{W}_{\boldsymbol{m}}(\rho))^2\end{bmatrix} & \begin{bmatrix}1 & W_i - \overline{W}_{\boldsymbol{m}}(\rho) \\ W_i - \overline{W}_{\boldsymbol{m}}(\rho) & (W_i - \overline{W}_{\boldsymbol{m}}(\rho))^2\end{bmatrix}\otimes \boldsymbol{\theta}^{\top} \\
\begin{bmatrix}1 & W_i - \overline{W}_{\boldsymbol{m}}(\rho) \\ W_i - \overline{W}_{\boldsymbol{m}}(\rho) & (W_i - \overline{W}_{\boldsymbol{m}}(\rho))^2\end{bmatrix}\otimes \boldsymbol{\theta} & \begin{bmatrix}1 & W_i - \overline{W}_{\boldsymbol{m}}(\rho) \\ W_i - \overline{W}_{\boldsymbol{m}}(\rho) & (W_i - \overline{W}_{\boldsymbol{m}}(\rho))^2\end{bmatrix}\otimes \text{diag}(\boldsymbol{\theta})
\end{bmatrix}
\end{align*}
and
\begin{align*}
\mathbf{T}_{\boldsymbol{m}}(\rho) &= \sum^I_{i=1}d_i(\rho)\widetilde{\mathbf{Z}}_i(\rho)^{\top}\mathbf{J}_{m_i}\widetilde{\mathbf{Z}}_i(\rho) = \sum^I_{i=1}d_i(\rho)\widetilde{\mathbf{Z}}_i(\rho)^{\top}\mathbf{1}_{m_i}\mathbf{1}^{\top}_{m_i}\widetilde{\mathbf{Z}}_i(\rho) \\
&= \sum^I_{i=1}m^2_id_i(\rho)\left(\begin{bmatrix} \begin{bmatrix}1 \\ W_i - \overline{W}_{\boldsymbol{m}}(\rho)\end{bmatrix} \\ \begin{bmatrix}1 \\ W_i - \overline{W}_{\boldsymbol{m}}(\rho)\end{bmatrix}\otimes \boldsymbol{\theta}\end{bmatrix}\right)\left(\begin{bmatrix} \begin{bmatrix} 1 & W_i - \overline{W}_{\boldsymbol{m}}(\rho) \end{bmatrix} & \begin{bmatrix} 1 & W_i - \overline{W}_{\boldsymbol{m}}(\rho) \end{bmatrix}\otimes  \boldsymbol{\theta}^{\top}\end{bmatrix}\right) \\
&= \sum^I_{i=1}m^2_id_i(\rho)\begin{bmatrix}
\begin{bmatrix}1 & W_i - \overline{W}_{\boldsymbol{m}}(\rho) \\ W_i - \overline{W}_{\boldsymbol{m}}(\rho) & (W_i - \overline{W}_{\boldsymbol{m}}(\rho))^2\end{bmatrix} & \begin{bmatrix}1 & W_i - \overline{W}_{\boldsymbol{m}}(\rho) \\ W_i - \overline{W}_{\boldsymbol{m}}(\rho) & (W_i - \overline{W}_{\boldsymbol{m}}(\rho))^2\end{bmatrix}\otimes \boldsymbol{\theta}^{\top} \\
\begin{bmatrix}1 & W_i - \overline{W}_{\boldsymbol{m}}(\rho) \\ W_i - \overline{W}_{\boldsymbol{m}}(\rho) & (W_i - \overline{W}_{\boldsymbol{m}}(\rho))^2\end{bmatrix}\otimes \boldsymbol{\theta} & \begin{bmatrix}1 & W_i - \overline{W}_{\boldsymbol{m}}(\rho) \\ W_i - \overline{W}_{\boldsymbol{m}}(\rho) & (W_i - \overline{W}_{\boldsymbol{m}}(\rho))^2\end{bmatrix}\otimes \boldsymbol{\theta}\boldsymbol{\theta}^{\top}
\end{bmatrix}.
\end{align*}
Thus,
$$\sum^I_{i=1}\widetilde{\mathbf{Z}}_i(\rho)^{\top}\mathbf{R}_i(\rho)^{-1}\widetilde{\mathbf{Z}}_i(\rho) = \begin{bmatrix}
\mathbf{A}_{\boldsymbol{m}}(\rho) & \mathbf{A}_{\boldsymbol{m}}(\rho)\otimes \boldsymbol{\theta}^{\top} \\
\mathbf{A}_{\boldsymbol{m}}(\rho)\otimes \boldsymbol{\theta} & \mathbf{B}_{\boldsymbol{m}}(\rho)
\end{bmatrix}
%= \begin{bmatrix}
%\mathbf{A}_{\boldsymbol{m}}(\rho) & \mathbf{A}_{\boldsymbol{m}}(\rho)\otimes \boldsymbol{\theta}^{\top} \\
%\mathbf{A}_{\boldsymbol{m}}(\rho)\otimes \boldsymbol{\theta} & \mathbf{B}_{\boldsymbol{m}}(\rho)
%\end{bmatrix},
$$
where
\begin{align*}
\mathbf{A}_{\boldsymbol{m}}(\rho) &= \sum^I_{i=1}\left(\frac{m_i}{1 - \rho} - \frac{\rho m_i^2d_i(\rho)}{1 - \rho}\right)\begin{bmatrix}
1 & W_i - \overline{W}_{\boldsymbol{m}}(\rho) \\
W_i - \overline{W}_{\boldsymbol{m}}(\rho) & (W_i - \overline{W}_{\boldsymbol{m}}(\rho))^2
\end{bmatrix}\\
&= \sum^I_{i=1}m_id_i(\rho)\begin{bmatrix}
1 & W_i - \overline{W}_{\boldsymbol{m}}(\rho) \\
W_i - \overline{W}_{\boldsymbol{m}}(\rho) & (W_i - \overline{W}_{\boldsymbol{m}}(\rho))^2
\end{bmatrix} \\
&= \sum^I_{i=1}m_id_i(\rho)\begin{bmatrix}
1 & W_i - \overline{W}_{\boldsymbol{m}}(\rho) \\
W_i - \overline{W}_{\boldsymbol{m}}(\rho) & W_i(1 - 2\overline{W}_{\boldsymbol{m}}(\rho)) + \overline{W}_{\boldsymbol{m}}(\rho)^2
\end{bmatrix} \\
&= \begin{bmatrix}
1 & 0 \\
0 & \overline{W}_{\boldsymbol{m}}(\rho)(1 - \overline{W}_{\boldsymbol{m}}(\rho))
\end{bmatrix}\sum^I_{i=1}m_id_i(\rho)
\end{align*}
due to that $W_i\in\{0, 1\}$ for all $i=1,\ldots,I$ and the definitions of $d_i(\rho)$ and $\overline{W}_{\boldsymbol{m}}(\rho)$ and
\begin{align*}
\mathbf{B}_{\boldsymbol{m}}(\rho) &= \frac{1}{1 - \rho}\sum^I_{i=1}m_i\begin{bmatrix}
1 & (W_i - \overline{W}_{\boldsymbol{m}}(\rho)) \\
(W_i - \overline{W}_{\boldsymbol{m}}(\rho)) & (W_i - \overline{W}_{\boldsymbol{m}}(\rho))^2
\end{bmatrix}\otimes \text{diag}(\boldsymbol{\theta}) \\
&- \frac{1}{1 - \rho}\sum^I_{i=1}\rho m^2_id_i(\rho)\begin{bmatrix}
1 & (W_i - \overline{W}_{\boldsymbol{m}}(\rho)) \\
(W_i - \overline{W}_{\boldsymbol{m}}(\rho)) & (W_i - \overline{W}_{\boldsymbol{m}}(\rho))^2
\end{bmatrix}\otimes \boldsymbol{\theta}\boldsymbol{\theta}^{\top} \\
&= \mathbf{A}_{\boldsymbol{m}}(\rho)\otimes \text{diag}(\boldsymbol{\theta}) \\
&+ \frac{1}{1 - \rho}\sum^I_{i=1}\rho m^2_id_i(\rho)\begin{bmatrix}
1 & (W_i - \overline{W}_{\boldsymbol{m}}(\rho)) \\
(W_i - \overline{W}_{\boldsymbol{m}}(\rho)) & (W_i - \overline{W}_{\boldsymbol{m}}(\rho))^2
\end{bmatrix}\otimes \left(\text{diag}(\boldsymbol{\theta}) - \boldsymbol{\theta}\boldsymbol{\theta}^{\top}\right).
\end{align*}
Since the lower-right $2p\times 2p$ block of $\left(\sum^I_{i=1}\widetilde{\mathbf{Z}}_i(\rho)^{\top}\mathbf{R}_i(\rho)^{-1}\widetilde{\mathbf{Z}}_i(\rho)\right)^{-1}$ is given by
$$\left[\mathbf{B}_{\boldsymbol{m}}(\rho) - \left(\mathbf{A}_{\boldsymbol{m}}(\rho)\otimes\boldsymbol{\theta}\right)\mathbf{A}_{\boldsymbol{m}}(\rho)^{-1}\left(\mathbf{A}_{\boldsymbol{m}}(\rho)\otimes\boldsymbol{\theta}^{\top}\right)\right]^{-1} = \left(\mathbf{B}_{\boldsymbol{m}}(\rho) - \mathbf{A}_{\boldsymbol{m}}(\rho)\otimes\boldsymbol{\theta}\boldsymbol{\theta}^{\top}\right)^{-1},$$
we consider
\begin{align*}
\mathbf{B}_{\boldsymbol{m}}(\rho) - \mathbf{A}_{\boldsymbol{m}}(\rho)\otimes\boldsymbol{\theta}\boldsymbol{\theta}^{\top} &= \left[\mathbf{A}_{\boldsymbol{m}}(\rho) + \frac{1}{1 - \rho}\sum^I_{i=1}\rho m^2_id_i(\rho)\begin{bmatrix}
1 & (W_i - \overline{W}_{\boldsymbol{m}}(\rho)) \\
(W_i - \overline{W}_{\boldsymbol{m}}(\rho)) & (W_i - \overline{W}_{\boldsymbol{m}}(\rho))^2
\end{bmatrix}\right] \\
&\otimes \left(\text{diag}(\boldsymbol{\theta}) - \boldsymbol{\theta}\boldsymbol{\theta}^{\top}\right) \\
&= \frac{1}{1 - \rho}\sum^I_{i=1}m_i\begin{bmatrix}
1 & (W_i - \overline{W}_{\boldsymbol{m}}(\rho)) \\
(W_i - \overline{W}_{\boldsymbol{m}}(\rho)) & (W_i - \overline{W}_{\boldsymbol{m}}(\rho))^2
\end{bmatrix}\otimes \left(\text{diag}(\boldsymbol{\theta}) - \boldsymbol{\theta}\boldsymbol{\theta}^{\top}\right).
\end{align*}
Therefore,
$$(\mathbf{B}_{\boldsymbol{m}}(\rho) - \mathbf{A}_{\boldsymbol{m}}(\rho)\otimes\boldsymbol{\theta}\boldsymbol{\theta}^{\top})^{-1} = \frac{1 - \rho}{\kappa(\rho)}\sum^I_{i=1}m_i\begin{bmatrix}
(W_i - \overline{W}_{\boldsymbol{m}}(\rho))^2 & -(W_i - \overline{W}_{\boldsymbol{m}}(\rho)) \\
-(W_i - \overline{W}_{\boldsymbol{m}}(\rho)) & 1
\end{bmatrix}\otimes \left(\text{diag}(\boldsymbol{\theta}) - \boldsymbol{\theta}\boldsymbol{\theta}^{\top}\right)^{-1},$$
where
$$\kappa(\rho) = \left(\sum^I_{i=1}m_i\right)\left(\sum^I_{i=1}m_i(W_i - \overline{W}_{\boldsymbol{m}}(\rho))^2\right) - \left(\sum^I_{i=1}m_i(W_i - \overline{W}_{\boldsymbol{m}}(\rho))\right)^2.$$
Note that
$$\sum^I_{i=1}m_i(W_i - \overline{W}_{\boldsymbol{m}}(\rho))^2 = \sum^I_{i=1}m_i(W_i - \overline{W}_{\boldsymbol{m}})^2 + I\bar{m}(\overline{W}_{\boldsymbol{m}}(\rho) - \overline{W}_{\boldsymbol{m}})^2$$
and
$$\sum^I_{i=1}m_i(W_i - \overline{W}_{\boldsymbol{m}}(\rho)) = I\bar{m}(\overline{W}_{\boldsymbol{m}} - \overline{W}_{\boldsymbol{m}}(\rho)).$$
We obtain
$$\kappa(\rho) \equiv I\bar{m}\sum^I_{i=1}m_i(W_i - \overline{W}_{\boldsymbol{m}})^2 = I\bar{m}\overline{W}_{\boldsymbol{m}}(1 - \overline{W}_{\boldsymbol{m}}),$$
which leads to \eqref{main result}.
\end{proof}
\end{document}